%% file: arXiv_v1.tex
\documentclass[a4paper,allowtoday,twocolumn,unpublished]{quantumarticle} 
\pdfoutput=1 

\usepackage{dsfont} 	
\usepackage{microtype} 	
\usepackage[numbers,sort&compress]{natbib} 	
\usepackage[utf8]{inputenc} 	
\usepackage[T1]{fontenc} 	
\usepackage[british]{babel} 	
\usepackage[dvipsnames,x11names]{xcolor} 	
\usepackage{amsmath,amssymb,amsthm,bm,amsfonts,bbm} 	
\usepackage{mathtools} 	
\usepackage{physics} 	
\usepackage[unicode,colorlinks=true,citecolor=OliveGreen,linkcolor=Purple,urlcolor=Magenta]{hyperref} 	
\usepackage{thmtools}
\usepackage{thm-restate}
\usepackage{caption} 
\usepackage{subcaption}
\usepackage{float}

\input{0_MTQ_macros}  

\newtheorem{corollary}{Corollary}

\newcommand{\hilbert}[1]{\mathcal{H}_{#1}}

\begin{document}
\title{Entanglement in quantum channel discrimination: sometimes less is more}

\author{Kristin Sundal Lien}
\affiliation{Sorbonne Universit\'{e}, CNRS, LIP6, F-75005 Paris, France}
\affiliation{Department of Physics, NTNU - Norwegian University of Science and Technology, Trondheim, Norway}

\author{Marco Túlio Quintino}
\orcid{0000-0003-1332-3477}
\affiliation{Sorbonne Universit\'{e}, CNRS, LIP6, F-75005 Paris, France}
\date{\today}
\begin{abstract}
Entanglement is known to be a powerful resource that improves performance in various quantum information and computational tasks. A standard example of such a phenomenon is the possibility of perfectly discriminating all four Pauli operations in a single shot via the superdense coding protocol. While entanglement is often a powerful resource for quantum channel discrimination, this is not necessarily the case. In this work, we identify scenarios in which the maximally entangled state is a bad choice of input state and, more generally, show that excessive entanglement can reduce channel discriminability dramatically. To do so, we present an explicit pair of unitary channels which are perfectly discriminable without entanglement, but for which any strategy with maximally entangled input states is $\epsilon$-close to a blind uniform guessing strategy. To develop a systematic approach, we introduce the concepts of Maximal Entanglement Worst Case (MEWC) and Maximal Entanglement Best Case (MEBC) pairs of channels, and present conditions for a pair of channels to be MEWC or MEBC. With these conditions, we show that the optimal input states for discriminating MEWC pairs of channels are necessarily separable, and provide non-trivial examples of measurement channels for which entanglement necessarily reduces the maximum probability of discrimination.
\end{abstract}

\maketitle


\tableofcontents

\section{Introduction}
A quantum channel is a general description deterministic admissible transformations of one quantum state into another, and it thus captures the effect of a physical process on a quantum system~\cite{watrous_theory_2018}. The ability to, as accurately as possible, describe and distinguish between different physical processes, is a fundamental part of physics. In a broad sense, physics is based on performing measurements on a system with the goal of understanding or confirming what theory correctly predicts the processes it has been subject to.

Quantum channel discrimination is a fundamental task that formalises the idea of distinguishing between possible physical processes. Being closely related to (and developed through the study of) parameter estimation and hypothesis testing~\cite{helstrom_minimum_1967, helstrom_quantum_1969}, quantum channel discrimination falls within the field of quantum metrology~\cite{giovannetti_quantum-enhanced_2004, toth_quantum_2014}. Some questions regarding quantum channel discrimination that have been studied include deriving fundamental bounds for how well channels can be discriminated~\cite{pirandola_fundamental_2019}, in what cases perfect discrimination is possible~\cite{datta_perfect_2021, acin_statistical_2001}, optimal input states~\cite{jencova_conditions_2016, caiaffa_channel_2018}, memory effects in channel discrimination~\cite{chiribella_memory_2008, ohst_characterising_2024}, and the use of parallel, sequential and indefinite-causal-order strategies for channel discrimination when multiple uses of the channel are available~\cite{bavaresco_strict_2021, bavaresco_unitary_2022}. 

Here, we consider the task of single-shot minimum error quantum channel discrimination, which consists in finding a strategy that maximises the success probability of determining  which one, among a known set of quantum channels, a device produces, while only having access to one use of the device~\cite{puzzuoli_entanglement_2018}. In particular, we will study the role of entanglement in this task. Many works highlight the advantages that can be achieved by using an entangled input state for channel discrimination~\cite{jencova_conditions_2016,puzzuoli_entanglement_2018, dariano_using_2001, sacchi_optimal_2005, sacchi_entanglement_2005, sacchi_minimum_2005, dariano_improved_2002}.

Furthermore, it has been shown that every entangled input state provides an advantage in some channel discrimination task~\cite{piani_all_2009}, and that more entanglement provides a greater advantage in the sense that there exists a larger set of channel pairs for which the state outperforms any less entangled state~\cite{bae_more_2019}. Building on this notion of usefulness of an input state,~\cite{caiaffa_channel_2018} defines the term channel discrimination power of a bipartite state, and show that the maximally entangled input state has the highest discriminative power in this sense. Additionally, there are several classes of channels for which it has been shown that the maximally entangled input state is always optimal~\cite{jencova_conditions_2016}, including all pairs of unital qubit channels, amongst which we find many important channel classes, like qubit unitaries, qubit projective measurements and the Pauli channels. 

However, an entangled input state does not always improve distinguishability between quantum channels. In~\cite{dariano_improved_2002}, it is shown that entanglement cannot improve distinguishability between two arbitrary unitary quantum channels. In particular, there are several known examples of channel discrimination tasks where the maximally entangled input state performs strictly worse than a partially entangled input state, as demonstrated in~\cite{jencova_conditions_2016, manna_maximally_2025}. These are the situations that we are interested in exploring and characterising through this work. Whereas the maximally entangled input state performs well in an arbitrary discrimination task, seeing as it is the ``least bad'' input in a worst case scenario, it is not necessarily a good input state for a specific, known channel discrimination task. 

In this work, we address the question of when having maximally entangled or highly entangled input states is a bad choice for channel discrimination. For this goal, we introduce the concepts of Maximal Entanglement Worst Case (MEWC) and Maximal Entanglement Best Case (MEBC) pairs of channels, and prove conditions for a pair of channels to be MEWC or MEBC. We then make use of our theorems to construct explicit examples where maximally entangled states or highly entangled states attain a poor performance when compared to separable or less entangled states.

This paper is structured as follows. We begin by presenting some motivating examples that demonstrate how the entanglement of the input state affects discriminability. In Section~\ref{ch:prelim}, we introduce our notation and some mathematical preliminaries. Thereafter, in Section~\ref{ch:results}, we present our main results regarding the role of entanglement in quantum channel discrimination tasks. We finish by demonstrating some examples of discrimination tasks where entanglement reduces discriminability in Sections~\ref{ch:measurement_ch} and~\ref{ch:unitary_ch}. 

\section{Examples: The maximal success probability as a function of the entanglement}
\label{ch:motivation_ex}
In this section, we present some discrimination tasks that demonstrate the effect that the amount of entanglement of the input state has. We will begin with two examples where entanglement increases the distinguishability between the channels, exemplifying why entanglement is often considered a resource for quantum channel discrimination. Thereafter, we show two examples of discrimination task where the optimal strategy uses no entanglement, and where, in fact, any entanglement strictly reduces discriminability. These examples demonstrate the main message of our work: that entanglement may act as a disturbance rather than as a resource in some quantum channel discrimination tasks. 

\subsection{Entanglement may be good for channel discrimination}
\label{ch:motivation_ex_ME_bad}
\subsubsection{Discriminating four Pauli channels}
An example that illustrates the power of entanglement for channel discrimination, is the task of discriminating the four Pauli channels, described by the Pauli matrices $\{ X, \, Y,\, Z,\, I \}$, with uniform prior probabilities $p = \frac{1}{4}$ for each of the four channels. If we are restricted to performing the Pauli operations on a single qubit $\ket{\psi}\in\mathbb{C}^2$, the task of discriminating the four Pauli channels is equivalent to discriminating among the four qubit states  $\{ X\ket{\psi},\, Y \ket{\psi}, \, Z\ket{\psi}, \, I\ket{\psi}\}$. The maximal success probability is then limited to $\frac{2}{4}=\frac{1}{2}$. If we, on the other hand, use a maximally entangled input state $\ket{\phi^+}:=\frac{\ket{00}+\ket{11}}{\sqrt{2}}$ and apply the channel to half of it, we want to discriminate the states
\begin{align}
    (I\otimes X)\ket{\phi^+} = \ket{\psi^+}& \\
    (I\otimes Y)\ket{\phi^+} = \ket{\psi^-}& \\
    (I\otimes Z)\ket{\phi^+} = \ket{\phi^-}& \\
    (I\otimes I)\ket{\phi^+} = \ket{\phi^+}& ,
\end{align}
which are mutually orthogonal and therefore perfectly distinguishable. This is exactly the property exploited in the superdense coding protocol~\cite{nielsen_quantum_2010, bennett_communication_1992}, which is often used as an example of how entanglement can be used as a resource. 

In order to have a quantitative analysis of the usefulness of entanglement, here we present the maximal success probability of discriminating the four Pauli channels as a function of the entanglement of the two-qubit input state $\ket{\psi}\in\mathbb{C}^2\otimes\mathbb{C}^2$ quantified by the entanglement entropy\footnote{The entanglement entropy of a bipartite pure state $\ket{\psi_{AA'}}\in \mathcal{H}_A\otimes\mathcal{H}_{A'}$  is defined as the entropy of $\tr_{A'}(\ketbra{\psi_{AA'}})$. }. A plot\footnote{Details on how these plots are produced are presented in the Appendix~\ref{ch:appendix}.} of the maximal success probability for discriminating the four Pauli channels as a function of the entanglement entropy is presented in Figure~\ref{fig:pauli_plot}. We can see that the success probability ranges from $p^*=0.5$, without entanglement, and increases until perfect discrimination $p^*=1$ with maximal entanglement.  

\begin{figure}[H]
    \centering
        \includegraphics[width=\columnwidth]{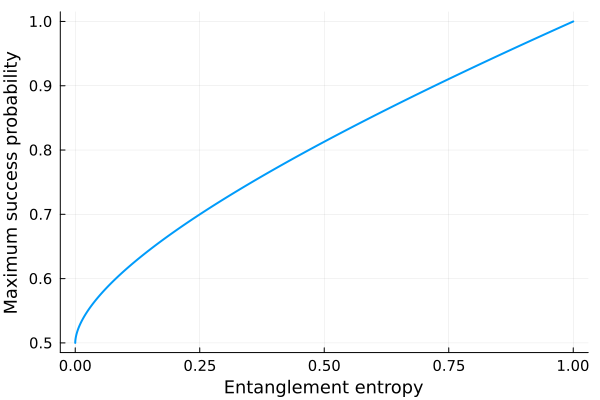}
        \caption{Maximal success probability of discrimination between the four Pauli channels. Success probability of discrimination increases with increasing entanglement.}
        \label{fig:pauli_plot}
\end{figure}

For dimension $d>2$, the Pauli unitaries can be generalised as the Clock-Shift operators%
\footnote{The clock-shift operators are given by $X_d^i Z_d^j, \quad \text{for } i,j \in \{0,\dots,d-1\}$, where $X_d := \sum_{l=0}^{d-1} \ketbra{ l \oplus 1}{l} , \qquad Z_d := \sum_{l=0}^{d-1} \omega^l \ketbra{l}{l}$, and $ \omega := e^{2\pi i / d}$.}
, for which it is known that the maximal probability of successful discrimination depends on the dimension of the auxiliary space, $d_{A'}$ via $p^*  = \min \{1,\frac{d_{A'}}{d}\}$ (see e.g.~\cite{ohst_characterising_2024} for a proof). We notice that for large dimensions $d$ and with separable input states $d_{A'} = 1$, the success probability tends to zero, whereas for $d_{A'} = d$, the maximal success probability is one, supporting the fact that entanglement is useful in this discrimination task. 

\subsubsection{Discriminating two Werner-Holevo channels}
We now consider discrimination between the symmetric and the antisymmetric Werner-Holevo channel, defined as follows~\cite{werner_counterexample_2002}:
\begin{align}
    \Phi_{\pm}(\rho) = \frac{1}{d\pm1} \Tr(\rho)I \pm \rho^{T}. 
\end{align}
When applied to one part of a maximally entangled state, these channels give
\begin{align}
    I \otimes \Phi_{\pm} (\ketbra{\phi^+}) & = \frac{1}{d} \sum_{i,j = 0}^{d-1} \ketbra{i}{j} \otimes \Phi_{\pm} (\ketbra{i}{j})\\
    & = \frac{1}{d(d \pm 1)} \left[ I  \pm F \right], 
\end{align}
where $F := \sum_{i,j =0}^{d-1} \ketbra{i}{j} \otimes \ketbra{j}{i}$ is the SWAP operator. 

We can check that the states $ I \otimes \Phi_{\pm} (\ketbra{\phi^+})$ have orthogonal support\footnote{We may also recognise $\Pi_\text{sym}:= \frac{I + F}{(d + 1)} $ as the projector onto the symmetric subspace and  $\Pi_\text{asym}:= \frac{I  - F}{(d - 1)}$ as the projector onto the antisymmetric subspace.}, hence they are perfectly discriminable. Therefore, a maximally entangled input state is optimal as pointed out in~\cite{jencova_conditions_2016, puzzuoli_ancilla_2017}. Furthermore, Figure~\ref{fig:wh_plot} shows that decreasing the amount of entanglement strictly decreases the success probability of discrimination. 

\begin{figure}[H]
    \centering
        \includegraphics[width=\columnwidth]{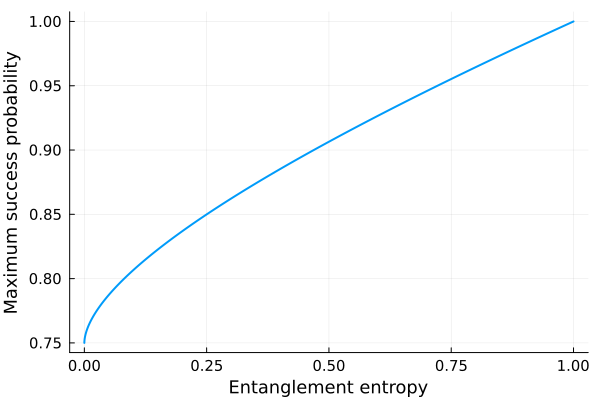}
        \caption{Maximal success probability of discrimination between Werner-Holevo channels. Success probability of discrimination increases with increasing entanglement.}
        \label{fig:wh_plot}
\end{figure}

\subsection{Entanglement may be bad for channel discrimination}
 The examples presented so far support the fact that entanglement is a powerful resource in discrimination tasks. However, this is not always the case. We now present two channel discrimination tasks where an entangled input state is not optimal for discrimination.
 
 \subsubsection{Discriminating two measurement channels}
 Figure~\ref{fig:m_plot} shows a discrimination task where increasing entanglement strictly decreases discrimination. 
\begin{figure}[H]
    \centering
        \includegraphics[width=\columnwidth]{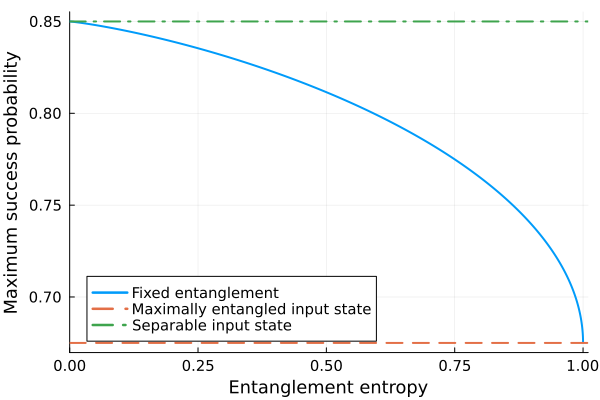}
        \caption{Maximal success probability of discrimination between two measurement channels. Success probability of discrimination decreases with increasing entanglement. The top horizontal (green) dot-dashed line is the maximal success probability with separable states, the bottom horizontal (orange) dashed line is the maximal success probability with separable states, and the (blue) filled curve is the optimal success probability with a fixed value of entanglement entropy.}
        \label{fig:m_plot}
\end{figure}
The channels discriminated in Figure~\ref{fig:m_plot} are two measurement channels corresponding to dichotomic measurements given as follows: 
\begin{align}
        \label{eq:measurement_channel}
        \Phi_{M^{X}}(\rho)  = \sum_{i=0}^{1} \Tr(\rho M^{X}_i)\, \ketbra{i}, \, X \in (A,B) , 
\end{align}
where 
\begin{align}
    M^{A}_0 = 
    \begin{pmatrix}
            \frac{1}{10} &0\\
            0& \frac{1}{5}
        \end{pmatrix}, \, M^{A}_1 = I-M^{A}_0
\end{align}
and 
\begin{align}
    M^{B}_0 = 
    \begin{pmatrix}
            \frac{3}{10} &\sqrt{\frac{1}{10}}\\
            \sqrt{\frac{1}{10}} & \frac{7}{10}
        \end{pmatrix}, \, M^{B}_1 = I-M^{B}_0. 
\end{align}
We see that the success probability of discrimination is optimal for a separable input state and drops towards its minimum for a maximally entangled input state. This shows that, for some channels, entangled input states have a strictly worse performance then separable ones, and that the use of entanglement may reduce the performance in some channel discrimination tasks. The details of this example and this class of channels will be discussed in section~\ref{ch:measurement_ch}. 

\subsubsection{Discriminating two unitary channels}
In this final example, we present a pair of unitary channels where a maximally entangled state is a bad choice of input state. Consider a pair of qudit unitaries given by identity $I$ and $V$ as follows:  
\begin{align}
    I:=& \sum_{i=0}^{d-1} \ketbra{i}, \\
    V:=& \ketbra{0} - \sum_{i=1}^{d-1}\ketbra{i}.
\end{align}
$I$ and $V$ are perfectly distinguishable with an unentangled input state. However, a maximally entangled input state gives a success probability equivalent to random guessing ($p^*=0.5$) in the limit $d\to \infty$ and is therefore asymptotically useless. 

The strategies in the two cases can be understood as follows. If we input the maximally entangled state $ \ket{\phi^+} := \frac{1}{\sqrt{d}}\sum_{i=0}^{d-1} \ket{i}\ket{i}$, we want to distinguish the output states $ I \otimes I  \ket{\phi^+} =  \ket{\phi^+}$ and $ W \otimes I  \ket{\phi^+} = \frac{1}{\sqrt{d}} \left(\ket{0}\ket{0} - \sum_{i=1}^{d-1} \ket{i}\ket{i} \right)$. As $d$ increases, the two states approach one another (when disregarding the global phase of $-1$) and become increasingly indistinguishable. 

If we, on the other hand, choose the unentangled input state $\ket{+}  := \frac{\ket{0}+ \ket{1}}{\sqrt{2}}$, we want to distinguish the states $I \ket{+}  = \ket{+}$ and $ W  \ket{+}  =   \frac{\ket{0} - \ket{1}}{\sqrt{2}} := \ket{-}$, which are orthogonal and thus perfectly distinguishable in any number of dimensions $d$.

Albeit simple, this example demonstrates a unitary discrimination task where the maximally entangled state is not a good choice of input state. In the limit of infinite dimension, it is even completely useless. 

We have now seen some examples showing that while quantum entanglement is often viewed as a resource for quantum channel discrimination, in particular tasks, the use of entanglement may instead reduce the dicriminability. In the following, we will analyse this problem from an analytical perspective. 
\section{Notation and preliminaries}
\label{ch:prelim}
Quantum systems are denoted by capital letters $A,B,...$, and they are associated to Hilbert spaces $\mathcal{H}_A \cong \mathbb{C}^{d_A}$, where $d_A$ is the dimension of the Hilbert space $\mathcal{H}_A$. The space of linear maps between $\mathcal{H}_A$ and $\mathcal{H}_B$ is denoted $\mathcal{L}(\mathcal{H}_A, \mathcal{H}_B)$, and the linear operators on $\mathcal{H}_A$ given by $ \mathcal{L}(\mathcal{H}_A, \mathcal{H}_A)$ is denoted $  \mathcal{L}(\mathcal{H}_A)$ for brevity. The notation $\mathcal{D(\hilbert{A})}$ is used for density operators on $\hilbert{A}$.  

Quantum channels are completely positive and trace preserving (CPTP) mappings of the form 
\begin{align}
    \Phi : \mathcal{L}(\hilbert{A}) \to \mathcal{L}(\hilbert{B}), 
\end{align}
transforming valid density operators to valid density operators. They are also called superoperators, and can be represented as operators through the Choi-Jamiołkowski isomorphism: 
\begin{definition}[The Choi-Jamiołkowski isomorphism~\cite{taranto_higher-order_2025, choi_completely_1975}]
Let $\mathcal{S}: \mathcal{L}(\hilbert{A}) \to \mathcal{L}(\hilbert{B}) $ be a linear map. The Choi operator $J({\mathcal{S}})\in \mathcal{L}(\mathcal{H}_B\otimes \mathcal{H}_A )$ of $\mathcal{S}$ is defined as 
\begin{align}
     J(\mathcal{S}) :=& \sum _{ij} \mathcal{S} (\ketbra{i}{j})\otimes \ketbra{i}{j} \\ 
      =&  (\mathcal{S} \otimes I_{A'})\left( \ketbra{\phi^+_d}\right)  d, 
\end{align}
where $\ket{\phi^+_d} := \frac{1}{\sqrt{d}} \sum_{i=0}^{d-1} \ket{i}\ket{i} $ is the maximally entangled state in dimension $d$. 
\end{definition}

In this work, we will primarily be interested in maps representing differences between two quantum channels, given by 
\begin{align}
    \Delta_{\Phi} := \Phi_1-\Phi_2, 
\end{align}
and their Choi operators $J(\Delta_{\Phi})$.

\subsection{Quantum channel discrimination}
The task of quantum channel discrimination consists in finding an optimal strategy to determine which one among a given, known set of channels, is realised~\cite{watrous_theory_2018}. Let $\{p_i,\Phi_i\}_{i=1}^{N}$ denote a set of $N$ channels $\Phi_i: \mathcal{L}(\hilbert{A}) \to\mathcal{L}(\hilbert{B})$, from which channel $\Phi_i$ is realised with probability $p_i$. In the case of channel discrimination, determining a strategy includes both selecting a suitable measurement and choosing an input state on which the unknown channel acts. The simplest type of strategy is to prepare an input state, let the unknown channel act on it, perform some measurement and then guess the channel based on the measurement outcome. This procedure is depicted in Figure~\ref{subfig:noent}. However, more generally, one can also prepare a bipartite quantum state $\rho_{AA'} \in \mathcal{L}(\hilbert{A} \otimes \hilbert{A'})$ and send one part of it through the channel before performing a measurement $M$ on the bipartite output state $(\Phi \otimes I_{A'})\rho_{AA'} \in \mathcal{L}(\hilbert{B} \otimes \hilbert{A'})$, as shown in Figure~\ref{subfig:anyent}. In the case when $\rho_{AA'}$ is a product state, this situation is equivalent to the simple situation in Figure~\ref{subfig:noent}, but if the bipartite input state is entangled, the correlations with the ancilla may make it possible to distinguish channels that are indistinguishable without entanglement, as shown in the introduction.
\begin{figure}[H]
    \centering
    \begin{subfigure}{0.32\textwidth}
        \centering
        \includegraphics[width=\linewidth]{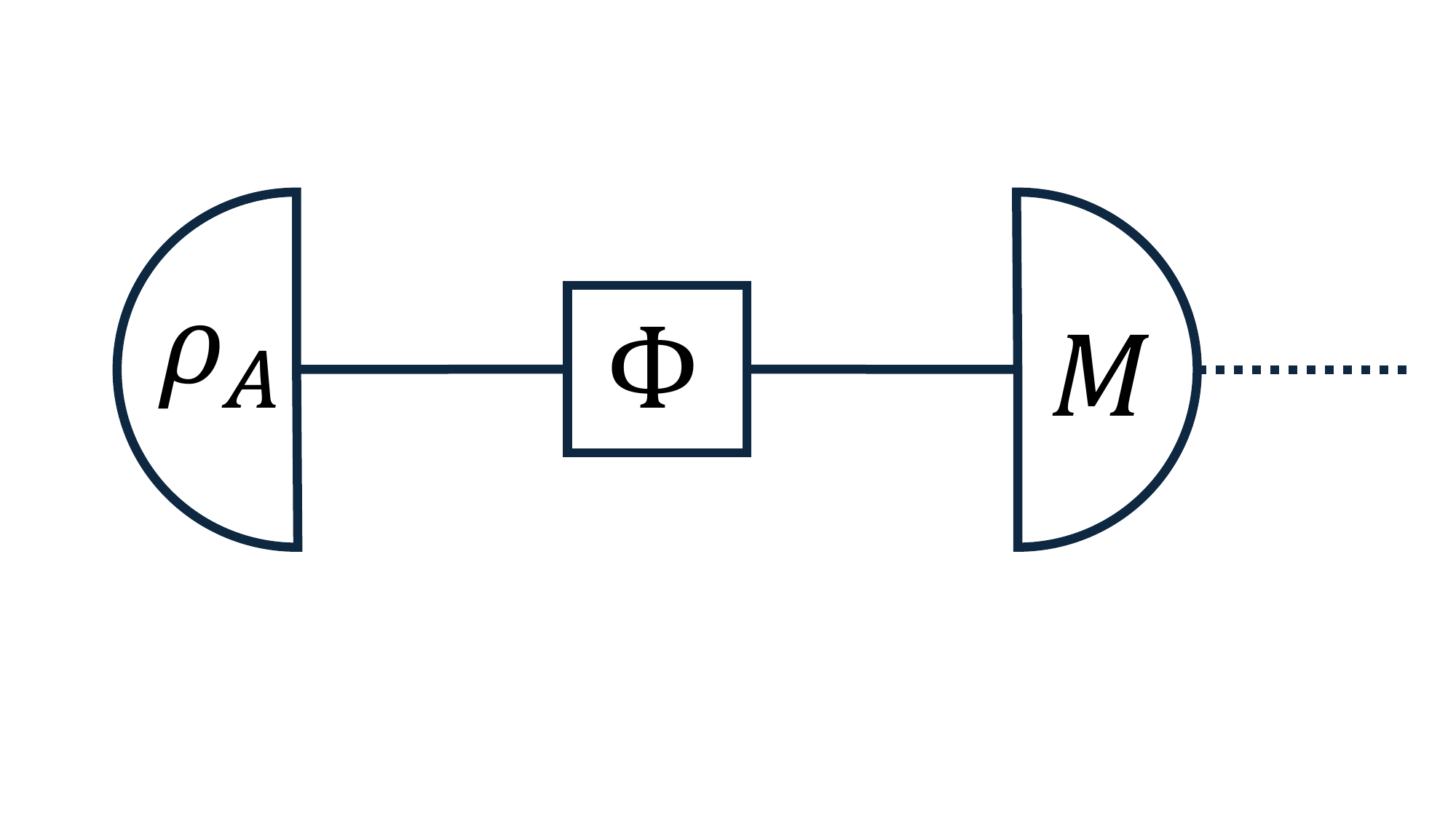}
        \caption{No entanglement.}
        \label{subfig:noent}
    \end{subfigure}
    \hfill
    \begin{subfigure}{0.32\textwidth}
        \centering
        \includegraphics[width=\linewidth]{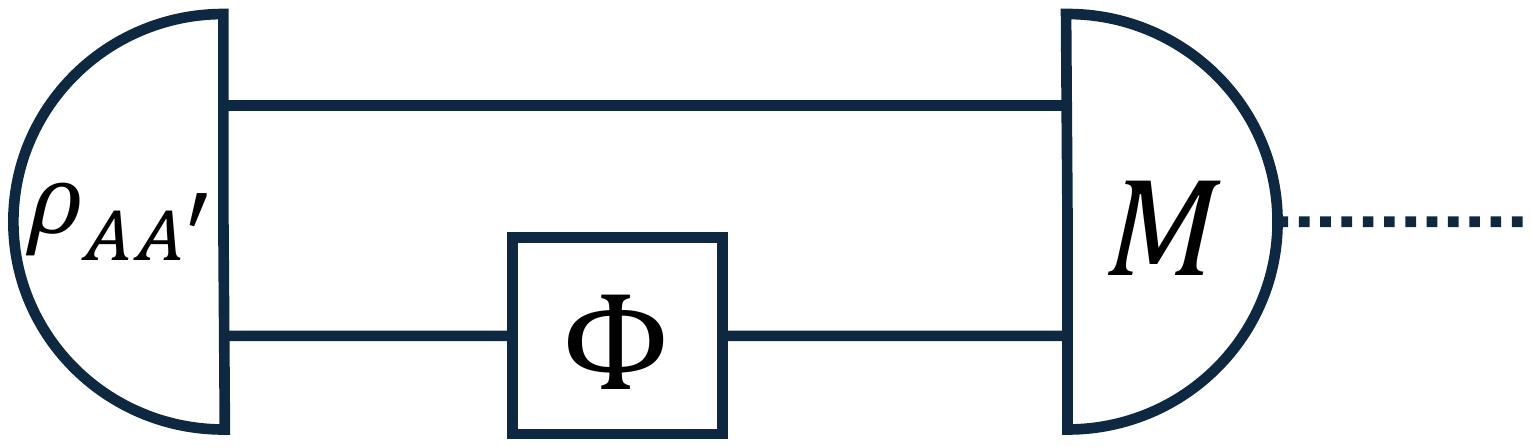}
        \caption{General bipartite input.}
        \label{subfig:anyent}
    \end{subfigure}
    \hfill
    \begin{subfigure}{0.32\textwidth}
        \centering
        \includegraphics[width=\linewidth]{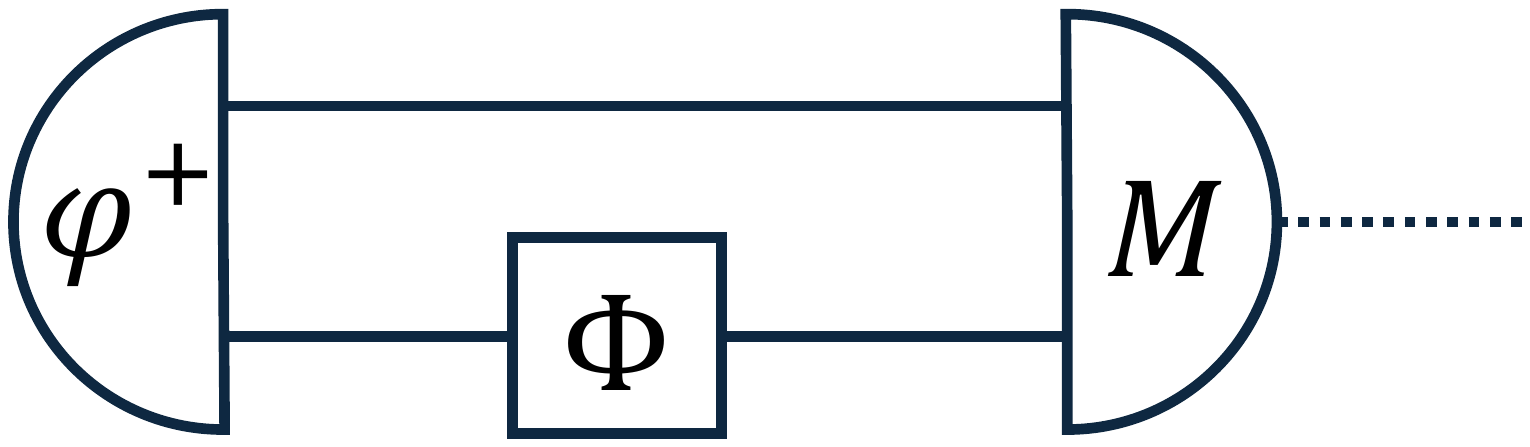}
        \caption{Max. entangled input.}
        \label{subfig:maxent}
    \end{subfigure}
    \caption[Strategies for measuring a quantum channel]{The figure shows three different types of strategies for measuring the channel $\Phi$. The simplest strategy, shown in (a), uses no ancilla. The most general strategy, shown in (b), uses a bipartite input state $\rho_{AA'}$ that may or may not be entangled with the auxiliary system $A'$. In (c), a maximally entangled input state $\phi^+$ is used, and we effectively measure the Choi state $ \frac{J(\Phi)}{d}$ of the channel $\Phi$.}
    \label{fig:ch_disc_strategies}
\end{figure}
The maximum probability of successfully distinguishing a pair of quantum channels $\Phi_1$ and $\Phi_2$ with uniform prior probabilities and $\Delta_{\Phi} = \Phi_1-\Phi_2$, is given by~\cite{watrous_theory_2018}
\begin{align}
    P_s^* = 
    \ \frac{1}{2} + \frac{1}{4} \norm{ \Delta_{\Phi}}_{\diamond}, 
\end{align}
where the diamond norm $\norm{\Delta_{\Phi}}_{\diamond}$ represents a distance between channels, and is defined as follows
\footnote{More generally, we can also define $ \Delta_{\Phi}(\lambda):= \lambda \Phi_1 - (1-\lambda)\Phi_2 $, where $\lambda\in [0,1]$ determines the prior probabilities for the two channels. In this case, the maximum success probability is $ P^* =\frac{1}{2} + \frac{1}{2} \norm{ \Delta_{\Phi}(\lambda)}_{\diamond}$. Although many results have straightforward generalisations to non-uniform prior probabilities (see e.g.~\cite{jencova_conditions_2016, puzzuoli_ancilla_2017}, in this work we restrict ourselves to the uniform case. } 
\begin{definition}[The diamond norm~\cite{watrous_theory_2018,kitaev_quantum_1997}]
    \label{def:diamondNorm}
   The diamond norm of a superoperator $\mathcal{S}: \mathcal{L}(\hilbert{A}) \to \mathcal{L}(\hilbert{B})$ is given by
    \begin{align}
    \norm{\mathcal{S}}_{\diamond} = \max_{\rho_{AA'}} \norm{(\mathcal{S} \otimes I_{A'}) \rho_{AA'}}_1,
    \end{align}
    where $\rho_{AA'}\in \mathcal{D}(\mathcal{H}_A\otimes\mathcal{H}_{A'})$ is a quantum state, i.e., $\rho_{AA'}\geq0$ and $\tr(\rho_{AA'})=1$.
\end{definition}
In particular, we will refer to the diamond norm of a difference map $\Delta_{\Phi} = \Phi_1-\Phi_2$, that is $\norm{\Delta_{\Phi}}_{\diamond}$, as the distinguishability between the two quantum channels $\Phi_1$ and $\Phi_2$ with a uniform probability distribution.

\subsection{Channel discrimination with a maximally entangled input state}
In the case when the bipartite input state is maximally entangled, as shown in Figure~\ref{subfig:maxent}, the channel discrimination task reduces to a state discrimination task between the Choi states of the channels. 

For state discrimination\footnote{For state discrimination, the maximum success probability of discrimination is given by $ P_s^* = \frac{1}{2} + \frac{1}{2} \norm{ \lambda \rho_0 - (1-\lambda) \rho_1}_1$ with prior probabilities determined by $\lambda \in [0,1]$ when we consider non-uniform probability distributions.}
, the maximum success probability is given by 
\begin{align}
    P_s^* = \frac{1}{2} + \frac{1}{4} \norm{ \rho_0 - \rho_1}_1, 
\end{align}
where $\rho_0, \rho_1$ are the states to be distinguished and $\norm{*}_1$ is the trace norm defined as $\norm{A}_1 =\operatorname{Tr}(\sqrt{A^{\dagger}A})$.

Since discrimination of Choi states of channels corresponds to channel discrimination with a maximally entangled input state, we define the following norm, which thus represents the distance between two channels when using a maximally entangled input state: 
\begin{definition}[The maximally entangled (ME) norm ]
    The ME-norm of a superoperator $\mathcal{S} : \mathcal{L}(\hilbert{A}) \to\mathcal{L}(\hilbert{B} )$ with an (unnormalised) Choi operator $J(\mathcal{S})$, is given by
    \begin{align}
        \norm{\mathcal{S}}_{\text{ME}} := \norm{\frac{J(\mathcal{S})}{d_A}}_1. 
    \end{align}
\end{definition}
The fact that the ME-norm is indeed a norm follows directly from the fact that the trace norm is a norm. This norm is also defined in~\cite{rosenthal_quantum_2024}, but with a different name (ACID-norm) and in a different context.  

In particular, the ME-norm of a difference map $\Delta_{\Phi} = \Phi_1-\Phi_2$ is given by 
      \begin{align}
        \norm{\Delta_{\Phi}}_{\text{ME}} = \norm{\frac{J(\Delta_{\Phi})}{d_A}}_1
    \end{align}
and will be called the ME-distinguishability between the two channels. 

\section{Entanglement and other general properties of the optimal input states for channel discrimination}
\label{ch:results}

In this section, we are interested in understanding what characterises the optimal input states for a specific discrimination task. In order to do this, we introduce the following quantity, which will prove useful throughout the rest of this work. 

\begin{definition}[M-operator]
\label{def:M_matrix}
 Let $\mathcal{S}: \mathcal{L}(\hilbert{A})\to \mathcal{L}(\hilbert{B})$ be a superoperator and $J(\mathcal{S})$ its Choi operator. Then the M-operator $M(\mathcal{S})$ is given by
\begin{align}
    \label{eq:M_matrix}
    M(\mathcal{S}) :=& \Tr_B\left[\sqrt{J(\mathcal{S})^\dagger J(\mathcal{S}}\,\right] \\
    =& \Tr_B\left[ \,\abs{J(\mathcal{S})}\,\right]. 
\end{align}
\end{definition}

In particular, for the difference map $\Delta_{\Phi} =\Phi_1-\Phi_2 $ between two quantum channels  $\Phi_1$, $\Phi_2: \mathcal{L}(\hilbert{A})\to \mathcal{L}(\hilbert{B})$, the M-operator 
\begin{align}
    M(\Delta_{\Phi}) :=& \Tr_B\left[\sqrt{J(\Delta_{\Phi})^\dagger J(\Delta_{\Phi})}\,\right] \\
    =& \Tr_B\left[ \,\abs{J(\Delta_{\Phi})}\,\right], 
\end{align}
will play an import role in characterising the optimal input state for different discrimination tasks. 

\subsection{Maximal entanglement best case pairs and maximal entanglement worst case pairs} 
In this section, we will study the extremal cases where a maximally entangled input state is either optimal or as suboptimal as it can be. Since the diamond norm and the ME-norm of the difference between two channels admit interpretations as the distance between the two with an optimised or maximally entangled input state, respectively, the two strategies can be compared by studying the relation between the two norms. The following result, introduced in e.g.~\cite{jencova_conditions_2016, brandao_generic_2015, watrous_theory_2018}, provides such a relation. 
\begin{proposition}[~\cite{jencova_conditions_2016, brandao_generic_2015, watrous_theory_2018}]
    \label{thm:inequality}
    Let $\mathcal{S}$ be a linear superoperator $\mathcal{S}: \mathcal{L}(\mathcal{H}_A) \rightarrow \mathcal{L}(\mathcal{H}_B)$ with $d = \dim({\mathcal{H}_A})$. Then,
    \begin{equation}
    \label{eq:norm_ineq_general}
    \norm{\mathcal{S}}_{\text{ME}} \leq \norm{\mathcal{S}}_{\diamond} \leq d\,\norm{\mathcal{S}}_{\text{ME}}, 
    \end{equation}
    where $\norm{\cdot}_{\text{ME}}$ represents the ME-norm and $\norm{\cdot}_{\diamond}$ represents the diamond norm. \newline
\end{proposition}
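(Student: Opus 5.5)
The lower bound follows directly from the definitions. The diamond norm is a maximum of $\norm{(\mathcal{S}\otimes I_{A'})\rho_{AA'}}_1$ over all input states. Taking $d_{A'}=d$ and $\rho_{AA'}=\ketbra{\phi^+_d}$ gives the candidate value
\begin{align}
\norm{(\mathcal{S}\otimes I)(\ketbra{\phi^+_d})}_1=\norm{J(\mathcal{S})/d}_1=\norm{\mathcal{S}}_{\text{ME}} .
\end{align}
The tensor ordering in the Choi operator only permutes subsystems, which is a unitary conjugation and leaves the trace norm invariant. Hence $\norm{\mathcal{S}}_{\text{ME}}\le\norm{\mathcal{S}}_\diamond$.

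For the upper bound I first reduce to pure inputs. By convexity of the trace norm, and because a mixed state can be purified on a larger ancilla (the partial trace is contractive in trace norm), it suffices to bound $\norm{(\mathcal{S}\otimes I)\ketbra{\psi}}_1$ for pure $\ket{\psi}\in\mathcal{H}_A\otimes\mathcal{H}_{A'}$ with $d_{A'}=d$.

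Every such pure state can be written as
\begin{align}
\ket{\psi}=(I\otimes X)\ket{\phi^+_d}\sqrt{d},
\end{align}
where $X$ is an operator on the ancilla with $\tr(X^\dagger X)=1$. Equivalently, $X$ is obtained by reshaping the coefficient matrix of $\ket{\psi}$, so $\norm{X}_2=1$. Since $\mathcal{S}$ acts only on the first factor,
\begin{align}
(\mathcal{S}\otimes I)\ketbra{\psi}=(I\otimes X)\,J'(\mathcal{S})\,(I\otimes X^\dagger),
\end{align}
where $J'(\mathcal{S})$ is the Choi operator in the $\mathcal{H}_B\otimes\mathcal{H}_{A'}$ ordering and $\norm{J'}_1=\norm{J}_1=d\norm{\mathcal{S}}_{\text{ME}}$.

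Hölder's inequality then gives
\begin{align}
\norm{(I\otimes X)J'(I\otimes X^\dagger)}_1\le\norm{I\otimes X}_\infty^2\,\norm{J'}_1 .
\end{align}
Moreover $\norm{I\otimes X}_\infty=\norm{X}_\infty\le\norm{X}_2=1$. Combining these, the output trace norm is at most $d\norm{\mathcal{S}}_{\text{ME}}$, and maximising over $\ket{\psi}$ yields the upper bound.

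The main obstacle is bookkeeping rather than substance. One must fix the normalisation convention correctly: the factor $\sqrt{d}$ in front of $(I\otimes X)\ket{\phi^+_d}$ is exactly what produces the factor $d$, via $d\cdot\ketbra{\phi^+_d}$ giving $J$. One must also justify that restricting to ancilla dimension $d$ and to pure states loses nothing, which is the standard fact that the diamond norm is attained with $d_{A'}=d_A$ and pure inputs.
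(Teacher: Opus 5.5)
Your proof is correct. The paper does not prove this proposition itself; it imports it from the cited references, so there is no in-paper argument to compare against step by step.

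Your argument is the standard one, and it is essentially the variational formula the paper later uses in Eq.~(\ref{eq:variat}). Write $X=V\abs{X}$ in polar form and remove $I\otimes V$ by unitary invariance of the trace norm. Your output operator then becomes $(I\otimes\sigma^{1/2})\,J\,(I\otimes\sigma^{1/2})$ with $\sigma=X^\dagger X$ a density operator. Choosing $\sigma=I/d$ gives the lower bound, exactly as in the proof of Corollary~\ref{cor:tigher_lower_bound}. The bound $\norm{\sigma^{1/2}}_\infty\le 1$ together with H\"older's inequality gives the upper bound.

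Two minor remarks. First, you do not need to reduce to $d_{A'}=d$: the same H\"older estimate holds for a rectangular $X:\mathbb{C}^d\to\mathcal{H}_{A'}$ with $\tr(X^\dagger X)=1$. Second, the usual diamond norm for arbitrary, not necessarily Hermiticity-preserving, linear maps maximises over rank-one operators $\ketbra{u}{v}$ rather than over states. Under that definition your argument still goes through, using $(I\otimes X)\,J\,(I\otimes Y^\dagger)$ with $\norm{X}_2=\norm{Y}_2=1$.
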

In particular, when $\mathcal{S} = \Delta_{\Phi} = \Phi_1-\Phi_2$ represents a difference between two CPTP maps $\Phi_1,\Phi_2$, we have
\begin{equation}
\label{eq:norm_ineq}
\norm{\Delta_{\Phi}}_{\text{ME}} \leq \norm{\Delta_{\Phi}}_{\diamond} \leq d\,\norm{\Delta_{\Phi}}_{\text{ME}}, 
\end{equation}
showing that the distinguishability between two quantum channels is upper bounded by $d$ times the corresponding ME-distinguishability. 
    
Later, in Corollary~\ref{cor:tigher_lower_bound} we make use of the M-operator (see Definition~\ref{def:M_matrix}) to prove a tighter lower bound for the diamond norm as a function of the corresponding ME-norm. Furthermore, a tighter upper bound is provided in~\cite{jencova_conditions_2016}), and for the special case of unitaries a tight upper bound is proven in~\cite{yamazaki_cliffordv_2025}. As we are interested in the interpretation of the upper bound as related to ME-distinguishability, we will mainly be concerned with the bound presented above. 

Since the norms are directly related to the optimal success probabilities of discrimination, $P_{ME}^* $ and $ P_\diamond^*$, the result above also provides the following inequalities for success probability of discrimination:
\begin{align}
        \label{eq:proba_ineq}
        P_{ME}^* \leq P_\diamond^* \leq \frac{1}{2} + d\,\left( P_{ME}^* - \frac{1}{2} \right)
    \end{align}
    or, equivalently,
    \begin{align}
    \left(P_{ME}^* - \frac{1}{2}\right) \leq \left(P_\diamond^* - \frac{1}{2}\right) \leq  d\,\left( P_{ME}^* - \frac{1}{2} \right).
\end{align}
This shows that there is an upper bound to how much better an optimised input state can perform compared to the maximally entangled one when discriminating between pairs of channels. Using a maximally entangled input state is always less then a factor $d$ times worse than using the optimal one, something which supports our understanding of the maximally entangled input state as a good input state when we have no information about the channels we want to distinguish. 

As we are interested in characterising the cases when maximal entanglement is optimal or maximally suboptimal, respectively, we introduce the terms Maximal Entanglement Best Case (MEBC) pairs and Maximal Entanglement Worst Case (MEWC) pair to mean pairs of channel where either of the bounds in Inequality~\ref{eq:norm_ineq} are saturated. Saturation of the lower bound implies by definition that a maximally entangled input state is optimal for discrimination of $\Phi_1$ and $\Phi_2$:  
\begin{definition}[Maximal Entanglement Best Case (MEBC) pair]
    \label{def:MEBC}
    Let $\Phi_1$, $\Phi_2: \mathcal{L}(\hilbert{A})\to \mathcal{L}(\hilbert{B})$ be two quantum channels with difference map $\Delta_{\Phi} =\Phi_1-\Phi_2 $. A maximally entangled input state is optimal for discrimination of $\Phi_1$ and $\Phi_2$  if and only if the lower bound in Inequality~\ref{eq:norm_ineq} is saturated, e.g. if and only if
    \begin{align}
        \norm{\Delta_{\Phi}}_{\text{ME}} = \norm{\Delta_{\Phi}}_{\diamond}.
    \end{align}
    Whenever this is the case, $\Phi_1$ and $\Phi_2$ will be called a Maximal Entanglement Best Case (MEBC) pair. 
\end{definition}

Similarly, we define Maximal Entanglement Worst Case pairs as follows:
\begin{definition}[Maximal Entanglement Worst Case (MEWC) pair]
    \label{def:MEWC}
    Let $\Phi_1$, $\Phi_2: \mathcal{L}(\hilbert{A})\to \mathcal{L}(\hilbert{B})$ be two quantum channels with difference map $\Delta_{\Phi} =\Phi_1-\Phi_2 $. A maximally entangled input state is maximally suboptimal for discrimination of $\Phi_1$ and $\Phi_2$  if and only if the upper bound in Inequality~\ref{eq:norm_ineq} is saturated, e.g if and only if
    \begin{align}
      \norm{\Delta_{\Phi}}_{\diamond} = d\,\norm{\Delta_{\Phi}}_{\text{ME}}.
    \end{align}
     Whenever this is the case, $\Phi_1$ and $\Phi_2$ will be called a Maximal Entanglement Worst Case (MEWC) pair, and the discriminability is a factor $d$ times worse than with the optimal strategy. 
\end{definition}

MEWC pairs are pairs of quantum channels for which the maximally entangled input state performs maximally bad as an input state. This does not mean that it is the worst possible input state for the discrimination task, but rather that there exists no other discrimination task where the maximally entangled input state performs worse relatively to the optimal strategy. 

The following Proposition, presented in~\cite{michel_note_2018}, provides necessary and sufficient conditions for saturation of the upper and lower bounds in inequality~\ref{eq:norm_ineq}:  

\begin{proposition}[~\cite{michel_note_2018}, theorems 1 and 2] 
    \label{thm:diamond_norm_bound_saturation}
    Let $\mathcal{S}$ be a superoperator $\mathcal{S}: \mathcal{L}(\mathcal{H}_A) \rightarrow \mathcal{L}(\mathcal{H}_B)$ with $d = \dim({\mathcal{H}_A})$ and $J(\Delta_{\Phi})$ its Choi operator. Then we have the following conditions for saturation of the upper and lower bounds of the diamond norm given in inequality~\ref{eq:norm_ineq}; $\norm{\mathcal{S}}_{\text{ME}} \leq \norm{\mathcal{S}}_{\diamond} \leq d\,\norm{\mathcal{S}}_{\text{ME}}$. \\ 
    \newline
    Lower bound saturation:
    \begin{align}
    \label{eq:cond_lower}
    \norm{\mathcal{S}}_{ME} = \norm{\mathcal{S}}_{\diamond}
    \iff
     M(\mathcal{S})
    = \norm{\mathcal{S}}_{ME}\, I_d. \textcolor{white}{mmm.}
    \end{align}
    Upper bound saturation:
    \begin{align}
    \label{eq:cond_upper}
    \norm{\mathcal{S}}_{\diamond} = d \norm{\mathcal{S}}_{ME}
    \iff
     & \exists\ \text{ a rank-1 projector } \\ 
     & \ketbra{\phi} \in \mathcal{L}(\mathcal{H}_A) \nonumber \\
    & \text{such that} \nonumber \\
     & M(\mathcal{S}) = \norm{J(\mathcal{S})}_1\, \ketbra{\phi}. \nonumber
    \end{align}
\end{proposition}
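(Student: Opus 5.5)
The plan is to reduce both saturation statements to one variational formula for the diamond norm in terms of $M(\mathcal{S})$. I assume throughout that $\mathcal{S}$ is Hermitian-preserving, so $J(\mathcal{S})$ is Hermitian. This covers the case of interest, $\mathcal{S}=\Delta_\Phi$. For general $\mathcal{S}$ one would replace the sign operator below by the polar unitary; I expect this to need extra care and would treat it separately.

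\emph{Step 1: parametrising inputs.} Convexity of the trace norm means the maximum in Definition~\ref{def:diamondNorm} is attained on pure states. Every pure $\ket{\psi}_{AA'}$ with reduced state $\sigma$ on $A$ is, up to an isometry on $A'$, equal to $(\sqrt{\sigma}^{T}\otimes I)\ket{\phi^+_d}\sqrt d$, and isometries on $A'$ do not change the trace norm. Hence
\begin{align}
\norm{\mathcal{S}}_{\diamond}=\max_{\sigma\in\mathcal{D}(\mathcal{H}_A)} \norm{(I_B\otimes\sqrt{\sigma})J(\mathcal{S})(I_B\otimes\sqrt{\sigma})}_1 ,
\end{align}
after absorbing the transpose into the choice of $\sigma$. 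The choice $\sigma=I/d$ gives $\norm{\mathcal{S}}_{ME}=\Tr M(\mathcal{S})/d=\norm{J(\mathcal{S})}_1/d$.

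\emph{Step 2: a universal upper bound.} Write $J=|J|^{1/2}P|J|^{1/2}$, where $P$ is the sign of $J$ (a unitary on the support, extended by the identity elsewhere). Set $A_\sigma=(I\otimes\sqrt\sigma)|J|^{1/2}$. Hölder's inequality gives
\begin{align}
\norm{A_\sigma P A_\sigma^\dagger}_1\le \norm{A_\sigma}_2^2=\Tr[(I\otimes\sigma)|J|]=\Tr[\sigma M(\mathcal{S})].
\end{align}
Therefore $\norm{\mathcal{S}}_\diamond\le\lambda_{\max}(M)\le \Tr M = d\norm{\mathcal{S}}_{ME}$.

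\emph{Lower-bound saturation.} If $M=cI$, then $c=\Tr M/d=\norm{\mathcal{S}}_{ME}$. Step 2 then gives $\norm{\mathcal{S}}_\diamond\le c$, and Proposition~\ref{thm:inequality} gives the reverse inequality, so equality holds. For the converse I will do a first-order perturbation around $\sigma=I/d$, which I expect to be the main technical step. Take $\sigma_t=I/d+tH$ with $H$ Hermitian and traceless. Then $\sqrt{\sigma_t}=d^{-1/2}(I+\tfrac{td}{2}H)+O(t^2)$. The trace norm dominates the linear functional $\Tr(P\,\cdot)$, so
\begin{align}
\norm{Y_t}_1\ge \Tr(PY_t)=\norm{\mathcal{S}}_{ME}+t\,\Tr[H M(\mathcal{S})]+O(t^2),
\end{align}
using $PJ=JP=|J|$ and cyclicity. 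If $\norm{\mathcal{S}}_\diamond=\norm{\mathcal{S}}_{ME}$, the left-hand side is at most $\norm{\mathcal{S}}_{ME}$ for all small $t$ of either sign. This forces $\Tr(HM)=0$ for every traceless Hermitian $H$, i.e.\ $M\propto I$, and the constant must then be $\norm{\mathcal{S}}_{ME}$.

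\emph{Upper-bound saturation.} If $\norm{\mathcal{S}}_\diamond=\Tr M$, then Step 2 forces $\lambda_{\max}(M)=\Tr M$. Since $M\ge0$, this means $M$ is rank one, $M=\Tr M\,\ketbra{\phi}=\norm{J}_1\ketbra{\phi}$. Conversely, suppose $M=\norm{J}_1\ketbra{\phi}$. Positivity of $|J|$ together with $\Tr_B|J|$ being supported on $\ket\phi$ gives $|J|=(I\otimes\ketbra{\phi})|J|(I\otimes\ketbra{\phi})$. Since $J$ and $|J|$ share their support, $J=X\otimes\ketbra{\phi}$ with $\norm{X}_1=\norm{J}_1$. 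Choosing $\sigma=\ketbra{\phi}$ in Step 1 then yields $\norm{\mathcal{S}}_\diamond\ge\norm{J}_1=d\norm{\mathcal{S}}_{ME}$, and Proposition~\ref{thm:inequality} gives equality.
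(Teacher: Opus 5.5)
Your argument is correct for Hermitian-preserving $\mathcal{S}$. It takes a genuinely different route from the paper, which gives no proof of its own and imports the result from Michel et al.\ through the square norm $\norm{J(\mathcal{S})}_{\square}=d\norm{\mathcal{S}}_{\diamond}$.

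You work directly in the paper's notation, starting from the variational formula $\norm{\mathcal{S}}_\diamond=\max_\sigma\norm{(I_B\otimes\sqrt{\sigma})J(\mathcal{S})(I_B\otimes\sqrt{\sigma})}_1$. (The vector you write down actually has reduced state $\sigma^T$, not $\sigma$, on $A$; this is harmless because you maximise over all $\sigma$.) You then use two ingredients:
\begin{itemize}
\item the H\"older estimate $\norm{\mathcal{S}}_\diamond\le\lambda_{\max}(M(\mathcal{S}))$, which gives sufficiency in the lower-bound criterion and necessity in the upper-bound criterion;
\item first-order stationarity of $t\mapsto\Tr[P\,Y_t]$ at $\sigma=I/d$, which gives necessity in the lower-bound criterion.
\end{itemize}
The remaining direction, that a rank-one $M(\mathcal{S})$ implies saturation, is the same support argument the paper uses in the proof of Theorem~\ref{thm:upper_bound_product}.

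What your route buys is a short, self-contained proof plus the intermediate inequality $\norm{\mathcal{S}}_\diamond\le\lambda_{\max}(M(\mathcal{S}))$. That inequality sharpens the upper bound of Proposition~\ref{thm:inequality}. Combined with Corollary~\ref{cor:tigher_lower_bound}, it places $\norm{\Delta_\Phi}_\diamond$ between $\Tr M(\Delta_\Phi)/r$ and $\lambda_{\max}(M(\Delta_\Phi))$. It also makes clear that the MEBC criterion is simply the stationarity condition of the variational problem at the maximally mixed state.

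One correction to your framing: the Hermitian-preserving hypothesis is not a convenience to be lifted later. The proposition as literally stated for arbitrary superoperators, with Definition~\ref{def:diamondNorm}, is false without it. Take $d=2$ and $\mathcal{S}(X)=\bra{1}X\ket{0}\,\ketbra{b}$. Then $J(\mathcal{S})=\ketbra{b}\otimes\ketbra{1}{0}$ and $M(\mathcal{S})=\ketbra{0}=\norm{J(\mathcal{S})}_1\ketbra{0}$. Yet every pure input $\sum_i\ket{i}\ket{v_i}$ (which suffices by convexity) produces $\ketbra{b}\otimes\ketbra{v_1}{v_0}$. Its trace norm is $\norm{v_0}\norm{v_1}\le\frac12<1=d\norm{\mathcal{S}}_{\text{ME}}$.

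For non-Hermitian $J$ the polar isometry no longer commutes with $\abs{J}$. Your H\"older step then only gives $\sqrt{\Tr[\sigma\Tr_B\abs{J^\dagger}]\,\Tr[\sigma M(\mathcal{S})]}$, so the single-operator criteria cannot be recovered. Since the paper only ever applies the proposition to $\Delta_\Phi$, your restricted version is precisely the right statement.
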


These conditions are proven in~\cite{michel_note_2018}, but with a slightly different notation and in a different context. They use a norm called the square norm, related to the diamond norm via
\begin{align}
    \norm{J(\mathcal{S})}_{\square} = d\norm{\mathcal{S}}_{\diamond}, 
\end{align}
where $d$ is the dimension of the input space.

We will now discuss the meaning of these conditions in the context of channel discrimination between two quantum channels $\Phi_1$ and $\Phi_2$ by means of the difference map $\Delta_{\Phi} = \Phi_1-\Phi_2$. Notice first that these condition precisely charecterises MEWC and MEBC pairs. Following Proposition~\ref{thm:diamond_norm_bound_saturation}, a $\Phi_1$ and $\Phi_2$ is a MEBC pair whenever $M(\Delta_{\Phi}) = \norm{\Delta_{\Phi}}_{\text{ME}}\, I_d$, in other words, when $M(\Delta_{\Phi})$ is full rank and all its eigenvalues are equal. A similar condition, there called the MEI condition, is presented in~\cite{jencova_conditions_2016}, stating that a maximally entangled input state is optimal if and only if $M(\Delta_{\Phi})$ is proportional to the identity operator. Based on this condition, they derive specific conditions for optimality of the maximally entangled input state for covariant channels, qubit channels, unitary channels and simple projective measurements.

On the other hand, $\Phi_1$ and $\Phi_2$ is a MEBC pair whenever  $M(\Delta_{\Phi})$ is rank $1$. In the next section~\ref{ch:properties}, we will show that saturation of the upper bound corresponds to cases when entanglement is of no use for channel discrimination and a separable input state is optimal.

\subsection{Properties of optimal input states for channel discrimination}
\label{ch:properties}
In the following, we will show that for MEWC pairs, a separable input state is optimal, and any entangled input state performs strictly worse. 

\begin{theorem}
    \label{thm:upper_bound_product}
    If $\Phi_1$, $\Phi_2: \mathcal{L}(\hilbert{A})\to \mathcal{L}(\hilbert{B})$ are a pair of Maximally Entangled Worst Case (Def.~\ref{def:MEWC}), then the optimal state for discriminating this pair is necessarily separable, and any entangled input state attains a strictly worse performance.

    Moreover, the M-operator $M(\Delta_{\Phi}) = \Tr_B\left[ \,\abs{J(\Delta_{\Phi})}\,\right]$ for the difference map $\Delta_{\Phi} =\Phi_1-\Phi_2 $ respects  $M(\mathcal{S}) = \norm{J(\mathcal{S})}_1\, \ketbra{\phi}$ for some pure quantum state $\ketbra{\phi}\in \mathcal{L}(\hilbert{A})$, and any optimal input states are necessarily given by
    \begin{align}
        \rho_{AA'} = \ketbra{\phi}\otimes \tau \in \mathcal{D}(\hilbert{A} \otimes \hilbert{A'})
    \end{align} 
    where $\tau \in \mathcal{D}(\hilbert{A'})$ is an arbitrary quantum state in the auxiliary system.
\end{theorem}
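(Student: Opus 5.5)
The plan is to prove a pointwise bound for every input state $\rho_{AA'}$ of the form
\begin{equation}
\norm{(\Delta_{\Phi}\otimes I_{A'})\rho_{AA'}}_1 \le \Tr\big(\rho_A\, \widetilde{M}\big),
\end{equation}
where $\rho_A=\Tr_{A'}\rho_{AA'}$ and $\widetilde{M}$ is $M(\Delta_{\Phi})$ up to the transpose coming from the Choi convention. The MEWC hypothesis then forces $\widetilde{M}$ to be rank one, and equality in this bound pins down the optimal inputs.

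\textbf{Step 1 (reduce to Choi operators).} By convexity of the trace norm, and since a purification can always be absorbed into a larger ancilla, it suffices first to treat pure inputs $\ket{\psi}_{AA'}$. Every such state can be written as $\ket{\psi}=\sqrt{d}\,(X\otimes I_{A'})\ket{\phi^+_d}$, possibly after an isometry on $A'$, with $XX^\dagger$ related to $\rho_A$ by a transpose. Then $(\Delta_{\Phi}\otimes I)(\ketbra{\psi})$ equals, up to a local isometry on the ancilla, $(I_B\otimes Y)\,J(\Delta_{\Phi})\,(I_B\otimes Y^\dagger)$ with $Y^\dagger Y=\rho_A^{T}$.

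\textbf{Step 2 (the key inequality).} Because $\Phi_1,\Phi_2$ are CP, $J(\Delta_{\Phi})$ is Hermitian, so we can write $J=P-N$ with $P,N\ge0$ and $|J|=P+N$. The triangle inequality gives
\begin{equation}
\norm{(I\otimes Y)J(I\otimes Y^\dagger)}_1\le \Tr\big[(I\otimes Y)(P+N)(I\otimes Y^\dagger)\big]=\Tr\big[\rho_A^{T}\,M(\Delta_{\Phi})\big],
\end{equation}
using $M(\Delta_{\Phi})=\Tr_B|J|$. For mixed $\rho_{AA'}$ the same bound follows by linearity of the right-hand side in $\rho_A$. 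Maximising over $\rho_A$ yields
\begin{equation}
\norm{\Delta_{\Phi}}_\diamond\le\lambda_{\max}(M)\le\Tr M=\norm{J}_1=d\norm{\Delta_{\Phi}}_{\text{ME}},
\end{equation}
which re-derives the upper bound of Proposition~\ref{thm:inequality}.

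\textbf{Step 3 (consequences of MEWC).} Assume $\norm{\Delta_{\Phi}}_\diamond=d\norm{\Delta_{\Phi}}_{\text{ME}}$. Both inequalities in Step 2 must then be equalities, so $\lambda_{\max}(M)=\Tr M$. Since $M\ge0$, it follows that $M$ is rank one, $M=\norm{J}_1\ketbra{\phi}$, which agrees with Proposition~\ref{thm:diamond_norm_bound_saturation}. The diamond-norm maximum is attained by some state $\rho^\star_{AA'}$, by compactness of the state space. For that state,
\begin{equation}
\norm{J}_1=\norm{(\Delta_{\Phi}\otimes I)\rho^\star}_1\le\Tr\big(\rho_A^{\star T}M\big)=\norm{J}_1\,\bra{\phi}\rho_A^{\star T}\ket{\phi}.
\end{equation}
This forces $\bra{\phi}\rho_A^{\star T}\ket{\phi}=1$. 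Hence $\rho^\star_A$ is the pure state $\ketbra{\phi}$, with $\phi$ understood up to the complex conjugation induced by the transpose convention, which I would fix carefully.

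\textbf{Step 4 (structure of optimal states).} A bipartite state whose marginal on $A$ is pure must be a product $\ketbra{\phi}\otimes\tau$, because any purification of $\rho_{AA'}$ has Schmidt rank one across $A$. So every optimal state is separable and of the stated form. Conversely, for any entangled $\rho_{AA'}$ the marginal $\rho_A\neq\ketbra{\phi}$, so $\Tr(\rho_A^T M)<\norm{J}_1$. That input therefore performs strictly worse.

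The main obstacle is Step 1/2 bookkeeping: getting the transposes and the isometry on $A'$ right, so that the bound reads precisely $\Tr(\rho_A^{T}M)$ with the paper's convention $J\in\mathcal{L}(\mathcal{H}_B\otimes\mathcal{H}_A)$. I would fix this by direct computation on $\ket{\psi}=\sum_i\ket{i}\otimes X^T\ket{i}$, which gives the correct identification of $\ket{\phi}$ versus its complex conjugate.
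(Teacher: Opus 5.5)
Your proof is correct, and it takes a different route from the paper's. The paper does not derive the rank-one property of $M(\Delta_\Phi)$ inside the proof. It takes it from Proposition~\ref{thm:diamond_norm_bound_saturation} (Michel's saturation criterion). It then uses a support argument: $\langle\psi|M(\Delta_\Phi)|\psi\rangle=0$ for $\psi\perp\phi$ forces $(I_B\otimes|\psi\rangle\langle\psi|)\,|J(\Delta_\Phi)|=0$, which gives the factorised form $J(\Delta_\Phi)\propto C\otimes|\phi\rangle\langle\phi|$. Finally it evaluates the variational formula~\eqref{eq:variat} exactly, obtaining $\langle\phi|\sigma|\phi\rangle\,\norm{C}_1$ for an input with marginal $\sigma$.

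You instead prove the pointwise bound $\norm{(\Delta_\Phi\otimes I)\rho_{AA'}}_1\le\Tr\big(\rho_A^{T}M(\Delta_\Phi)\big)$ from $J=P-N$ and the triangle inequality, and then analyse equality at an optimiser. This makes your proof self-contained:
\begin{itemize}
\item it re-derives the upper bound of Proposition~\ref{thm:inequality} in the sharper form $\norm{\Delta_\Phi}_\diamond\le\lambda_{\max}\big(M(\Delta_\Phi)\big)$;
\item it proves, rather than cites, the implication ``MEWC $\Rightarrow$ $M(\Delta_\Phi)$ rank one'';
\item it never needs the product structure of $J$.
\end{itemize}
The paper's route gives an exact value for every input, and a support lemma it reuses in Theorem~\ref{thm:rankM_thm}. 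In the MEWC case your bound is in fact attained by every input, so both routes arrive at the same formula. The one point you leave implicit is that every $\tau$ is allowed. This is immediate: for a product input the value $\norm{\Delta_\Phi(\sigma)}_1$ does not depend on $\tau$, and an optimiser exists.

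Your caution about transposes is warranted, and your bookkeeping is the correct one. Take the paper's convention $J=\sum_{ij}\Delta_\Phi(|i\rangle\langle j|)\otimes|i\rangle\langle j|$ and write $|\psi\rangle=\sum_i|i\rangle\otimes Z|i\rangle$. Then $(\Delta_\Phi\otimes I)(|\psi\rangle\langle\psi|)=(I_B\otimes Z)\,J\,(I_B\otimes Z^\dagger)$ with $Z^\dagger Z=\rho_A^{T}$. So the optimal marginal is $(|\phi\rangle\langle\phi|)^{T}=|\bar\phi\rangle\langle\bar\phi|$, not the eigenvector $|\phi\rangle$ of $M(\Delta_\Phi)$ itself. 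The paper's proof silently drops this transpose: in~\eqref{eq:variat}, $\sigma^{1/2}$ appears where $(\sigma^{T})^{1/2}$ belongs. The statement is therefore right only after replacing $\phi$ by $\bar\phi$ in the optimal state, or when $\phi$ can be chosen real.

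This is not cosmetic. Take $N_0=\tfrac12 I$ and $M_0=\tfrac12 I+\tfrac25|\chi\rangle\langle\chi|$ with $|\chi\rangle=(|0\rangle+i|1\rangle)/\sqrt2$. One finds $M(\Delta_\Phi)=\tfrac45|\bar\chi\rangle\langle\bar\chi|$. Yet the input $|\bar\chi\rangle$ produces identical outputs under both channels, while $|\chi\rangle$ is optimal.

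Finally, both you and the paper tacitly need $\Delta_\Phi\neq0$. If $\Phi_1=\Phi_2$, the pair is trivially MEWC, $M(\Delta_\Phi)=0$ is not rank one, and every input, entangled or not, is optimal.
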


\begin{proof}
    Assume $M(\Delta_{\Phi}) = \Tr_B{\abs{J(\Delta_{\Phi})}} = \alpha \ketbra{\phi} $. 
    For all $\ket{\psi}$ orthogonal to $\ket{\phi}$, we have
    \begin{align}
        \bra{\psi} \Tr_B{\abs{J(\Delta_{\Phi})}} \ket{\psi} = 0 = \Tr\left((I_B \otimes \ketbra{\psi}) \abs{J(\Delta_{\Phi})}\right).
    \end{align}
    Since both $(I_B \otimes \ketbra{\psi})$ and $\abs{J(\Delta_{\Phi})}$ are positive operators, and the trace of their product is zero, we have
    \begin{align}
        (I_B \otimes \ketbra{\psi}) \abs{J(\Delta_{\Phi})} = 0, 
    \end{align}
    giving $\abs{J(\Delta_{\Phi})} = 0$ for all components in $\mathcal{H}_A$ orthogonal to $\ket{\phi}$.
    Thus we have $\text{supp}(\abs{J(\Delta_{\Phi})}) = \operatorname{supp}(J(\Delta_{\Phi}))  \subseteq  \mathcal{H}_B\, \otimes \,  \text{span}\{\ket{\phi}\}$
    and $J(\Delta_{\Phi}) \propto C \otimes \ketbra{\phi} $ for some hermitian operator $C \in \mathcal{L}(\mathcal{H}_{B}$). 
    To see that a product input state is optimal in this case, we use the following form of the diamond norm:
    \begin{align}
        \label{eq:variat}
        \norm{\Phi}_{\diamond} & = \max_{\sigma} \norm{(\sigma^{1/2} \otimes I)\,J(\Delta_{\Phi})(\sigma^{1/2} \otimes I)}_1 \\
        & = \max_{\sigma} \norm{(\sigma^{1/2} \otimes I)\,(\ketbra{\phi} \otimes C)(\sigma^{1/2} \otimes I)}_1\\
        & = \max_{\sigma}\bra{\phi}\sigma\ket{\phi}\ \norm{C}_1
    \end{align}
    where we have used the multiplicativity of the trace norm of a tensor product ($\norm{X \otimes Y}_1 = \norm{X}_1 \, \norm{Y}_1$) and the fact that $\norm{\sigma^{\frac{1}{2}} \ketbra{\phi} \sigma^{\frac{1}{2}}}_1 = \bra{\phi}\sigma\ket{\phi}$. 
    This is clearly maximised when $\sigma = \ketbra{\phi}$. Furthermore, since $\sigma = \Tr_{A'}(\rho_{AA'})$, the partial trace of the total input state $\rho_{AA'} \in \mathcal{L}(\mathcal{H}_A\otimes\mathcal{H}_{A'})$ is a pure state and the total input state $\rho_{AA'}$ must be a product state. Finally, if the total input state $\rho_{AA'}$ is entangled, $\sigma = \Tr_{A'}(\rho_{AA'})$ is a mixed state and $\bra{\phi}\sigma\ket{\phi}$ is strictly smaller than one. 
\end{proof}
This shows that MEWC pairs correspond to discrimination tasks where a separable input state is optimal and any entanglement, in fact, reduces the discriminability between the two channels. This is the case for the example in Figure~\ref{fig:m_plot}. 

Theorem~\ref{thm:upper_bound_product} presents a strong constraint on optimal input states for discriminating a pair of channels which are MEWC based on their associated M-operator. 
In the following, we generalise this result to identify a necessary condition for optimal input states in \textit{any} discrimination task between a pair of channels $\Phi_1$ and $\Phi_2$. 

\begin{theorem}
    \label{thm:rankM_thm}
    If $\rho_{AA'}\in \mathcal{D}(\hilbert{_A}\otimes\hilbert{A'})$ is an optimal state for discriminating the pair of channels $\Phi_1,\Phi_2 : \mathcal{L}(\mathcal{H}_A) \rightarrow \mathcal{L}(\mathcal{H}_B)$, then
     \begin{align}
         \Tr_{A'}(\rho_{AA'})\in \mathcal{D}\big(\operatorname{supp}(M(\Delta_{\Phi}))\big),
     \end{align}
      where $M(\Delta_{\Phi}) = \Tr_B\left[ \,\abs{J(\Delta_{\Phi})} \, \right]$ is the M-operator as in Definition~\ref{def:M_matrix}. 
\end{theorem}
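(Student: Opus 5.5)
The plan is to reduce everything to the variational form of the diamond norm already used in the proof of Theorem~\ref{thm:upper_bound_product}, namely
\begin{align}
\norm{(\Delta_{\Phi}\otimes I_{A'})\rho_{AA'}}_1 = \norm{(I_B\otimes \sigma^{1/2})\,J(\Delta_{\Phi})\,(I_B\otimes \sigma^{1/2})}_1, \qquad \sigma=\Tr_{A'}(\rho_{AA'}),
\end{align}
valid for pure $\rho_{AA'}$ (possibly up to a transpose of $\sigma$ in the computational basis and a local isometry on $A'$, neither of which affects supports or trace norms). Mixed inputs are handled by convexity of the trace norm, since an optimal mixed input forces each pure component to be optimal. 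So it suffices to show that, whenever $\Delta_{\Phi}\neq 0$, any $\sigma$ with weight outside $\operatorname{supp}(M(\Delta_{\Phi}))$ can be strictly improved.

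First I show that $J(\Delta_{\Phi})$ lives on $\mathcal{H}_B\otimes\operatorname{supp}(M)$, exactly as in the proof of Theorem~\ref{thm:upper_bound_product}. Let $P$ be the projector onto $\operatorname{supp}(M(\Delta_{\Phi}))$. For any $\ket{\psi}$ orthogonal to this support, $0=\bra{\psi}M\ket{\psi}=\Tr\big((I_B\otimes\ketbra{\psi})\abs{J}\big)$. Positivity then gives $(I_B\otimes\ketbra{\psi})\abs{J}=0$. Since $J$ is Hermitian, $\operatorname{supp}J=\operatorname{supp}\abs{J}$, and therefore $J=(I_B\otimes P)J(I_B\otimes P)$.

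Next, write $X:=P\sigma^{1/2}$. The previous step gives
\begin{align}
(I\otimes\sigma^{1/2})J(I\otimes\sigma^{1/2}) = (I\otimes X^\dagger)J(I\otimes X).
\end{align}
Take the polar decomposition $X=(XX^\dagger)^{1/2}W$ with $W$ a partial isometry whose range projector $WW^\dagger$ contains the support of $(XX^\dagger)^{1/2}$. Conjugating by $W$ then preserves the trace norm, so the figure of merit equals $\norm{(I\otimes\tau^{1/2})J(I\otimes\tau^{1/2})}_1$ with $\tau:=XX^\dagger=P\sigma P$. Now $\Tr\tau=\Tr(P\sigma)=:t\le 1$, and the normalised state $\tau/t$ achieves the value
\begin{align}
\tfrac{1}{t}\,\norm{(I\otimes\sigma^{1/2})J(I\otimes\sigma^{1/2})}_1 .
\end{align}
If $t<1$ and the original value is nonzero, this is a strict improvement, contradicting optimality. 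The value at an optimum is nonzero because $\norm{\Delta_{\Phi}}_\diamond\ge\norm{\Delta_{\Phi}}_{\text{ME}}>0$ by Proposition~\ref{thm:inequality}. Hence $\Tr(P\sigma)=1$, i.e.\ $\sigma\in\mathcal{D}(\operatorname{supp}M)$.

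The main obstacle is the polar-decomposition step: one must check carefully that replacing $\sigma^{1/2}$ by $(P\sigma P)^{1/2}$ leaves the trace norm unchanged. This requires the range of $W$ to cover everything relevant, and the transpose/basis conventions of the variational formula to be applied consistently. I would handle it by arguing directly with singular values: $Y^\dagger Y$ and $YY^\dagger$ share nonzero spectra, where $Y=\abs{J}^{1/2}(I\otimes X)$, and $J$ is split into its positive and negative parts. I would also note that the degenerate case $\Phi_1=\Phi_2$ must be excluded, since then $M=0$ and the claim is vacuous or false.
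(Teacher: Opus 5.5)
Your proof follows the paper's route. From $\Tr\left((I_B\otimes\ketbra{\psi})\abs{J}\right)=0$ you get $J(\Delta_{\Phi})=(I_B\otimes P)J(\Delta_{\Phi})(I_B\otimes P)$. You then replace $\sigma$ by its normalised compression $P\sigma P/\Tr(P\sigma)$ and gain a factor $1/\Tr(P\sigma)$.

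Where you add detail, you are more careful than the paper. The paper only asserts that $\norm{((P\sigma P)^{1/2}\otimes I)J((P\sigma P)^{1/2}\otimes I)}_1=\norm{(\sigma^{1/2}\otimes I)J(\sigma^{1/2}\otimes I)}_1$, on the grounds that ``only the part of $\sigma$ in $\mathcal{S}$ contributes''. Your polar decomposition $P\sigma^{1/2}=(P\sigma P)^{1/2}W$ actually proves this, so the singular-value fallback in your last paragraph is not needed. The paper also does not address three further points you handle: mixed $\rho_{AA'}$, the fact that strict improvement needs a nonzero optimal value, and the case $\Phi_1=\Phi_2$. In that last case every state is optimal but $\mathcal{D}(\operatorname{supp}M(\Delta_{\Phi}))$ is empty.

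One claim in your setup is false, and it is exactly where the statement as literally written breaks. A transpose does change supports: $\operatorname{supp}(\sigma^T)=\overline{\operatorname{supp}(\sigma)}$, the complex-conjugate subspace in the computational basis. Under the paper's convention $J=\sum_{ij}\mathcal{S}(\ketbra{i}{j})\otimes\ketbra{i}{j}$, a pure input with reduced state $\sigma$ has value $\norm{(I_B\otimes\sqrt{\sigma^T})J(I_B\otimes\sqrt{\sigma^T})}_1$. So your argument, like the paper's, proves $\sigma^T\in\mathcal{D}(\operatorname{supp}M(\Delta_{\Phi}))$, i.e.\ $\sigma\in\mathcal{D}(\operatorname{supp}M(\Delta_{\Phi})^T)$. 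That is a genuinely different condition.

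Here is a concrete counterexample to the unconjugated version. Take qubit measurement channels with $M_0=\frac{1}{2}I$ and $N_0=\frac{1}{2}\ketbra{-i}$, where $\ket{\pm i}=(\ket{0}\pm i\ket{1})/\sqrt{2}$.
\begin{itemize}
\item Then $M_0-N_0=\frac{1}{2}\ketbra{+i}$ and $J(\Delta_{\Phi})=\frac{1}{2}(\ketbra{0}-\ketbra{1})\otimes\ketbra{-i}$, so $M(\Delta_{\Phi})=\ketbra{-i}$.
\item Every input gives $\norm{(\Delta_{\Phi}\otimes I)\rho_{AA'}}_1=\bra{+i}\sigma\ket{+i}$.
\item Hence the unique optimal reduced state is $\ketbra{+i}$, which is orthogonal to $\operatorname{supp}M(\Delta_{\Phi})$.
\end{itemize}
The paper's measurement-channel section masks this by replacing $\abs{(M_i-N_i)^T}$ with $\abs{M_i-N_i}$. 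If you state the conclusion for $\sigma^T$ (equivalently, with $M(\Delta_{\Phi})^T$), your proof is complete.
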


\begin{proof}
Let us define $\mathcal{S} := \operatorname{supp}(M(\Delta_{\Phi}))$.
    Similarly as in the proof of Theorem~\ref{thm:upper_bound_product}, we can show that for every $\ket{\psi} \in \mathcal{S}^{\perp}$,  
    \begin{align}
        (I_B \otimes \ketbra{\psi}) \abs{J(\Delta_{\Phi})} = 0, 
    \end{align}
    giving $\abs{J(\Delta_{\Phi})} = 0$ for all components in $\mathcal{H}_A$ orthogonal $\mathcal{S}$. Therefore, $\operatorname{supp}(J(\Delta_{\Phi}))  \subseteq  \mathcal{H}_B \otimes \, \mathcal{S}$. 
    
    Let $\sigma$ be an arbitrary density matrix and 
    \begin{align}
    \sigma' = \frac{P_{\mathcal{S}}\sigma P_{\mathcal{S}}}{\Tr(P_{\mathcal{S}}\sigma P_{\mathcal{S}})}
    \end{align}
    its normalised compression onto $\mathcal{S}$, where $P_{\mathcal{S}}$ is a projector onto the space $\mathcal{S}$. Inserting this expression in the variational form of the diamond norm (~\ref{eq:variat}) gives 
    \begin{align}
    & \norm{(\sigma'^{\frac{1}{2}} \otimes I)\,J(\Delta_{\Phi})(\sigma'^{\frac{1}{2}} \otimes I)}_1 \\
    & = \frac{1}{\Tr(P_{\mathcal{S}}\sigma P_{\mathcal{S}})}\norm{([P_{\mathcal{S}}\sigma P_{\mathcal{S}}]^\frac{1}{2} \otimes I)\,J(\Delta_{\Phi})([P_{\mathcal{S}}\sigma P_{\mathcal{S}}]^\frac{1}{2} \otimes I)}_1 \nonumber \\
    & = \frac{1}{\Tr(P_{\mathcal{S}}\sigma P_{\mathcal{S}})}\norm{(\sigma^\frac{1}{2} \otimes I)\,J(\Delta_{\Phi})(\sigma^\frac{1}{2} \otimes I)}_1, \nonumber 
    \end{align}
    where the final inequality arises from the fact that $J(\Delta_{\Phi})$ only has support on $ \mathcal{H}_B \, \otimes \, \mathcal{S} $, so only the part of $\sigma$ that is in $\mathcal{S}$ can give a contribution to the norm. $\Tr(P_{\mathcal{S}}\sigma P_{\mathcal{S}})$ is strictly smaller than one as long as $\sigma$ is not in $\mathcal{S}$, so for any input state $\sigma$, the norm 
    \begin{align}
        \norm{(\sigma^{1/2} \otimes I)\,J(\Delta_{\Phi})(\sigma^{1/2} \otimes I)}_1 
    \end{align}
    will increase by instead using its compression onto $\mathcal{S}$. We can therefore conclude that the optimal input state $\Tr_{A'}\ketbra{\Psi}$ must be a density operator on $\mathcal{S}$. Furthermore, if an input state has rank higher than $\rank{M(\Delta_{\Phi})}$, it must necessarily have components outside of $\mathcal{S}$, and thus it cannot be optimal, since the discriminability would increase by instead using its compression onto $\mathcal{S}$. 
\end{proof}
Following this theorem, we note that in any discrimination task, if an input state has Schmidt rank higher than the rank of the M-operator associated to the task, it cannot be optimal.

This theorem shows that there is, in some sense, a limit to how much entanglement, quantified by the Schmidt rank of the input state, that is beneficial in the discrimination task of two specific, known quantum channels. Entanglement is only useful within the subspace on which the map describing the difference between the two quantum channels works non-trivially. $M(\Delta_{\Phi})$ identifies exactly this subspace. Whereas maximal entanglement is often regarded as optimal for channel discrimination in a universal or average sense,~\cite{caiaffa_channel_2018, bae_more_2019}, it is evident that for specific discrimination tasks, this is generally not the case. Optimality of the maximally entangled input state corresponds, in fact, to the very specific cases when the difference map $\Delta_{\Phi}$ works non-trivially and equally on all directions in the eigenspace of $M(\Delta_{\Phi})$. Intuitively, and as shown in this section, channels for which the difference map $\Delta_{\Phi}$ has an ``asymmetrical'' action and/or works only on a subspace of the input space, are better distinguished by tailored input states with reduced entanglement. In the extreme MEWC case where the difference map only ``sees'' one specific direction, an unentangled input state in this direction is optimal.

\subsection{A tighter inequality for the diamond norm and the ME norm}
The quantity $\Tr(P_{\mathcal{S}}\sigma P_{\mathcal{S}}) \in (0,1)$ as defined in the proof of Theorem~\ref{thm:rankM_thm} does, in some sense, provide a quantitative way of describing the disadventageousness of a state $\sigma$, as we have the following
 \begin{align}
    \norm{(\sigma^{1/2} \otimes I)\,J(\Delta_{\Phi})(\sigma^{1/2} \otimes I)}_1 \\
    =  \Tr(P_{\mathcal{S}}\sigma P_{\mathcal{S}}) \norm{(\sigma'^{1/2} \otimes I)\,J(\Delta_{\Phi})(\sigma'^{1/2} \otimes I)}_1 \\
    \leq \Tr(P_{\mathcal{S}}\sigma P_{\mathcal{S}}) \norm{\Delta_{\Phi}}_{\diamond}, 
\end{align}
where all quantities are defined as in the proof of Theorem~\ref{thm:rankM_thm}. This shows that the maximum possible discriminability with an input state $\sigma$ is given by the factor $\Tr(P_{\mathcal{S}}\sigma P_{\mathcal{S}})$ times the optimal discriminability. 

Based on this, we can show the following result for the optimality of the maximally entangled state when we know the rank of $M(\Delta_{\Phi})$: 
\begin{corollary}
\label{cor:tigher_lower_bound}
    Let $\Phi_1,\Phi_2 : \mathcal{L}(\mathcal{H}_A) \rightarrow \mathcal{L}(\mathcal{H}_B)$ be two quantum channels and $\Delta_{\Phi}= \Phi_1-\Phi_2$ their difference. Let $J(\Delta_{\Phi})$ be the Choi operator of $\Delta_{\Phi}$ and $M(\Delta_{\Phi}) = \Tr_B\left[ \,\abs{J(\Delta_{\Phi})} \, \right]$ with rank r, as defined as in~\ref{eq:M_matrix}. Then we have
    \begin{align}
    \norm{\Delta_{\Phi}}_{\text{ME}} \leq \frac{r}{d}\norm{\Delta_{\Phi}}_{\diamond}
     \end{align}
\end{corollary}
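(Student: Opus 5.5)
The plan is to apply the inequality stated just before the corollary, with $\sigma$ chosen to be the maximally mixed state $\sigma = I_d/d$. This choice corresponds exactly to the maximally entangled input.

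\textbf{Step 1 (variational form at $\sigma = I_d/d$).} Using the variational form of the diamond norm, Eq.~\eqref{eq:variat}, at $\sigma = I_d/d$ we get
\begin{align}
\norm{(\sigma^{1/2}\otimes I)\,J(\Delta_{\Phi})\,(\sigma^{1/2}\otimes I)}_1 = \norm{\frac{J(\Delta_{\Phi})}{d}}_1 = \norm{\Delta_{\Phi}}_{\text{ME}} .
\end{align}
Here $d = d_A$. The identification with the ME-norm is immediate from its definition.

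\textbf{Step 2 (apply the compression bound).} The inequality derived in the proof of Theorem~\ref{thm:rankM_thm} says that the value at $\sigma$ is at most $\Tr(P_{\mathcal{S}}\sigma P_{\mathcal{S}})\,\norm{\Delta_{\Phi}}_{\diamond}$, where $\mathcal{S} = \operatorname{supp}(M(\Delta_{\Phi}))$. With $\sigma = I_d/d$ we have
\begin{align}
\Tr(P_{\mathcal{S}}\sigma P_{\mathcal{S}}) = \frac{\Tr(P_{\mathcal{S}})}{d} = \frac{\dim \mathcal{S}}{d} = \frac{r}{d},
\end{align}
since $\mathcal{S}$ is the support of an operator of rank $r$. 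Combining the two steps gives $\norm{\Delta_{\Phi}}_{\text{ME}} \leq \frac{r}{d}\norm{\Delta_{\Phi}}_{\diamond}$.

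\textbf{Step 3 (the one point to check carefully).} The main subtlety is the equality
\begin{align}
\norm{(\sigma^{1/2}\otimes I)\,J\,(\sigma^{1/2}\otimes I)}_1 = \norm{([P_{\mathcal{S}}\sigma P_{\mathcal{S}}]^{1/2}\otimes I)\,J\,([P_{\mathcal{S}}\sigma P_{\mathcal{S}}]^{1/2}\otimes I)}_1 .
\end{align}
This is not valid for arbitrary $\sigma$, since $\sigma^{1/2}$ need not commute with $P_{\mathcal{S}}$. For $\sigma = I_d/d$, however, it commutes trivially: $P_{\mathcal{S}}\sigma P_{\mathcal{S}} = P_{\mathcal{S}}/d$, so $[P_{\mathcal{S}}\sigma P_{\mathcal{S}}]^{1/2} = P_{\mathcal{S}}/\sqrt d$. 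Because Theorem~\ref{thm:rankM_thm} shows $\operatorname{supp}(J(\Delta_{\Phi})) \subseteq \mathcal{H}_B\otimes\mathcal{S}$ (with $\mathcal{S}$ the $A$-factor), we have $(P_{\mathcal{S}}\otimes I)J(P_{\mathcal{S}}\otimes I) = J$. Hence both sides equal $\norm{J}_1/d$.

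Finally, the normalised compression $\sigma' = P_{\mathcal{S}}/r$ is a valid density operator. Its value in the variational form is
\begin{align}
\norm{(\sigma'^{1/2}\otimes I)\,J\,(\sigma'^{1/2}\otimes I)}_1 = \frac{\norm{J}_1}{r} ,
\end{align}
and this is at most $\norm{\Delta_{\Phi}}_{\diamond}$ by Eq.~\eqref{eq:variat}. Rearranging gives $\norm{J}_1/d \le \frac{r}{d}\norm{\Delta_{\Phi}}_{\diamond}$, which is the claim.
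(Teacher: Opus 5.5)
Your proof is correct and follows the paper's argument. You specialise the compression bound $\norm{(\sigma^{1/2}\otimes I)J(\Delta_{\Phi})(\sigma^{1/2}\otimes I)}_1 \le \Tr(P_{\mathcal{S}}\sigma P_{\mathcal{S}})\norm{\Delta_{\Phi}}_{\diamond}$ to $\sigma = I_d/d$, and your explicit evaluation at $\sigma' = P_{\mathcal{S}}/r$, giving $\norm{J(\Delta_{\Phi})}_1/r \le \norm{\Delta_{\Phi}}_{\diamond}$, is that same bound written out.

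One correction to your Step 3 remark, which does not affect your proof: the compression identity holds for every $\sigma$, not only for $\sigma = I_d/d$. Using $J = (P_{\mathcal{S}}\otimes I)J(P_{\mathcal{S}}\otimes I)$, polar decomposition gives $\sigma^{1/2}P_{\mathcal{S}} = U\,(P_{\mathcal{S}}\sigma P_{\mathcal{S}})^{1/2}$, where $U$ is isometric on the support of $(P_{\mathcal{S}}\sigma P_{\mathcal{S}})^{1/2}$. Conjugating by $U\otimes I$ therefore leaves the trace norm unchanged.
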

\begin{proof}
    From the discussion above, we have 
     \begin{align}
    &\norm{(\sigma^{1/2} \otimes I)\,J(\Delta_{\Phi})(\sigma^{1/2} \otimes I)}_1 \\
    &\leq \Tr(P_{\mathcal{S}}\sigma P_{\mathcal{S}}) \norm{\Delta_{\Phi}}_{\diamond}, 
    \end{align}
    where $\sigma$ is an arbitrary reduced input state and $P_{\mathcal{S}}$ is a projector onto the support of $M(\Delta_{\Phi})$, denoted $\mathcal{S}$, and $r=\operatorname{rank}(M(\Delta_{\Phi}))$. If we let $\sigma = \frac{I}{d}$ be the maximally mixed state, meaning that the total input state is maximally entangled, we have
    \begin{align}
        \Tr(P_{\mathcal{S}}\sigma P_{\mathcal{S}}) = \Tr(P_{\mathcal{S}}\frac{I}{d} P_{\mathcal{S}}) = \frac{1}{d}\Tr( P_{\mathcal{S}}) = \frac{r}{d}. 
    \end{align}
    Moreover, we recognise that 
    \begin{align}
        &\norm{(\sigma^{1/2} \otimes I)\,J(\Delta_{\Phi})(\sigma^{1/2} \otimes I)}_1 \\
        =& \frac{1}{d}\norm{(I \otimes I)\,J(\Delta_{\Phi})(I \otimes I)}_1 \\
        = &\norm{\frac{J(\Delta_{\Phi})}{d}}_1 = \norm{\Delta_{\Phi}}_{\text{ME}}, 
    \end{align}
    giving
    \begin{align}
        \norm{\Delta_{\Phi}}_{\text{ME}} \leq \frac{r}{d} \norm{\Delta_{\Phi}}_{\diamond}. 
    \end{align}
\end{proof}
This offers a tighter lower bound for the diamond norm: 
\begin{align}
\norm{\Delta_{\Phi}}_{\text{ME}} \leq \frac{d}{r}\norm{\Delta_{\Phi}}_{\text{ME}} \leq \norm{\Delta_{\Phi}}_{\diamond} \leq d\,\norm{\Delta_{\Phi}}_{\text{ME}}, 
\end{align}
Again, we see that for $r=1$, the lower and upper bounds for $\norm{\Delta_{\Phi}}_{\diamond} $ coincide, and we have equality in the upper bound. If $r=d$, the lower bound coincides with the original lower bound, allowing for (but not forcing) ME optimality.

Throughout this section, we see that the M-operator reveals useful information about an arbitrary channel pair discrimination task. It tells us whether a maximally entangled input state is optimal, whether we need entanglement at all, and also what is the maximal useful ancillary dimension for a discrimination task. These properties can therefore be found without the need for any optimisation.  

\section{Finding channels that are Maximal Entanglement Worst Case pairs}
\subsection{MEWC pairs of measurement channels}
\label{ch:measurement_ch}
One class of channels for which we have found examples of MEWC pairs are the measurement channels mentioned in the introduction, defined as follows: 
\begin{definition}[Measurement channel] 
\label{def:measurement_channel}
    Let $M = \{M_i\}_{i=0}^{n-1}$ be a POVM with $n$ outcomes in dimension $d$, and let $\rho \in \mathcal{D}(\hilbert{A})$ be a density operator of dimension $d$. The measurement channel corresponding to the POVM $M$ is given by
    \begin{align}
        \Phi_M(\rho)  = \sum_{i=0}^{n-1} \Tr(\rho M_i)\, \ketbra{i}.  
    \end{align}
\end{definition}
The following theorem provides conditions for when a pair of measurement channels is MEWC. 
\begin{theorem}
    \label{thm:meas_cond}
    Let $\Phi_M,\Phi_N: \mathcal{L}(\hilbert{A}) \to \mathcal{L}(\hilbert{B})$ be two measurement channels corresponding to the two $n$-outcome POVMs $M =\{M_i\}_{i=0}^{n-1}$ and $N = \{N_i\}_{i=0}^{n-1}$. The channels  $\Phi_M, \Phi_N$ form a Maximal Entanglement Worst Case pair if and only if there exists a pure state $\ketbra{\phi}\in\mathcal{L}(\hilbert{A})$ and coefficients $\gamma_i\in\mathbb{C}$ such that 
        \begin{align}
         (M_i-N_i) = \gamma_i \ketbra{\phi}, \quad \forall i \in \{0,\ldots , n-1\}.
    \end{align}

    Moreover, the optimal input state is necessarily given by $\rho_{AA'}=\ketbra{\phi} \otimes \tau$, where $\tau \in \mathcal{D}(\hilbert{A'})$ is an arbitrary quantum state.
\end{theorem}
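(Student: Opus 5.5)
The plan is to reduce everything to the upper-bound saturation criterion of Proposition~\ref{thm:diamond_norm_bound_saturation} and then use Theorem~\ref{thm:upper_bound_product} for the statement about optimal inputs. First I compute the Choi operator of the difference map. Since $\Phi_M(\ketbra{i}{j}) = \sum_k \bra{j}M_k\ket{i}\,\ketbra{k} = \sum_k (M_k^T)_{ij}\ketbra{k}$, we get
\begin{align}
J(\Delta_\Phi) = \sum_{k=0}^{n-1} \ketbra{k}\otimes (M_k-N_k)^T =: \sum_k \ketbra{k}\otimes D_k^T,
\end{align}
which is block diagonal in the output basis. Hence $\abs{J(\Delta_\Phi)} = \sum_k \ketbra{k}\otimes \abs{D_k^T}$, and therefore
\begin{align}
M(\Delta_\Phi) = \sum_k \abs{D_k}^T, \qquad \norm{J(\Delta_\Phi)}_1 = \sum_k \norm{D_k}_1 .
\end{align}
Each $D_k$ is Hermitian, so the coefficients $\gamma_i$ will automatically turn out to be real.

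For the ``if'' direction, suppose $D_k = \gamma_k\ketbra{\phi}$ for all $k$. Then $\abs{D_k} = \abs{\gamma_k}\ketbra{\phi}$, so $M(\Delta_\Phi) = \big(\sum_k\abs{\gamma_k}\big)\ketbra{\bar\phi}$, where $\ket{\bar\phi}$ is the complex conjugate of $\ket{\phi}$ in the computational basis. Since $\norm{J(\Delta_\Phi)}_1=\sum_k\abs{\gamma_k}$, we have $M(\Delta_\Phi) = \norm{J(\Delta_\Phi)}_1\ketbra{\bar\phi}$. The upper-bound condition of Proposition~\ref{thm:diamond_norm_bound_saturation} then gives the MEWC property.

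For the ``only if'' direction, suppose the pair is MEWC. Proposition~\ref{thm:diamond_norm_bound_saturation} gives $\sum_k\abs{D_k}^T = \alpha\ketbra{\chi}$ for some rank-one projector. Each $\abs{D_k}^T$ is positive semidefinite, and a sum of PSD operators with rank-one total forces every summand to have support inside $\operatorname{span}\{\ket{\chi}\}$. To see this, take any $\ket{\psi}\perp\ket{\chi}$: then $\sum_k\bra{\psi}\abs{D_k}^T\ket{\psi}=0$, so each term vanishes. Hence $\abs{D_k}^T = c_k\ketbra{\chi}$, i.e.\ $\abs{D_k}=c_k\ketbra{\phi}$ with $\ket{\phi}:=\ket{\bar\chi}$. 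Since $D_k$ is Hermitian with $\operatorname{supp}D_k=\operatorname{supp}\abs{D_k}$, the operator $D_k$ is rank at most one and supported on $\ket{\phi}$. It follows that $D_k=\gamma_k\ketbra{\phi}$ with $\gamma_k=\pm c_k\in\mathbb{R}$.

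The final claim follows from Theorem~\ref{thm:upper_bound_product}: optimal inputs are $\ketbra{\phi'}\otimes\tau$, where $\ket{\phi'}$ is the vector appearing in $M(\Delta_\Phi)$. The main obstacle is bookkeeping the transpose/conjugation in the Choi convention, so that the optimal input is exactly the $\ket{\phi}$ appearing in $M_i-N_i$ and not its conjugate. I would address this directly: $\bra{\phi}\sigma\ket{\phi}$ enters through $\Delta_\Phi(\sigma)=\sum_k\gamma_k\bra{\phi}\sigma\ket{\phi}\ketbra{k}$ (and its extension with an ancilla). This shows that the optimum is attained precisely when $\Tr_{A'}\rho_{AA'}=\ketbra{\phi}$, and such a pure marginal forces a product input. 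That matches the variational argument in the proof of Theorem~\ref{thm:upper_bound_product}, with the conjugation absorbed by the transpose in the Choi operator.
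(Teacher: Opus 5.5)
Your proposal is correct and follows essentially the same route as the paper: compute the block-diagonal Choi operator $\sum_k \ketbra{k}\otimes(M_k-N_k)^T$, obtain $M(\Delta_\Phi)$ as a sum of the positive operators $\abs{M_k-N_k}$ (up to transpose), and apply the rank-one criterion of Proposition~\ref{thm:diamond_norm_bound_saturation} term by term. Your bookkeeping is more careful than the paper's in two places: the paper replaces $\abs{(M_i-N_i)^T}$ by $\abs{M_i-N_i}$ (it is really $\abs{M_i-N_i}^T$), and it leaves the \emph{Moreover} clause implicit, whereas your direct computation $(\Delta_\Phi\otimes I)(\rho_{AA'})=\sum_k\gamma_k\ketbra{k}\otimes(\bra{\phi}\otimes I)\rho_{AA'}(\ket{\phi}\otimes I)$ identifies $\ket{\phi}$ itself, rather than its complex conjugate, as the optimal input without going through the conjugation-sensitive variational argument.
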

The proof of this result will make use of the following lemma:
\begin{lemma}
    \label{lm:m_lemma}
    Let $M(\Delta_{\Phi}) = \Tr_B\left[ \,\abs{J(\Delta_{\Phi})} \, \right]$ for a pair of channels in dimension $d$ as before. Then we have
    \begin{align}
        M(\Delta_{\Phi}) \propto \ketbra{\phi} \iff M({\Delta_{\Phi}}) = d \, \norm{\Delta_{\Phi}}_{\text{ME}} \ketbra{\phi}
    \end{align}
    and 
     \begin{align}
        M(\Delta_{\Phi}) \propto I_d \iff M({\Delta_{\Phi}}) = \norm{\Delta_{\Phi}}_{\text{ME}} \, I_d. 
    \end{align}
    
\end{lemma}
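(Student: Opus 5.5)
The whole argument rests on one identity: the trace of the M-operator equals the trace norm of the Choi operator. Since $|J(\Delta_{\Phi})|$ is positive semidefinite, taking the full trace in two steps gives
\begin{align}
\Tr\big(M(\Delta_{\Phi})\big) = \Tr\big(\Tr_B |J(\Delta_{\Phi})|\big) = \Tr |J(\Delta_{\Phi})| = \norm{J(\Delta_{\Phi})}_1 = d\,\norm{\Delta_{\Phi}}_{\text{ME}},
\end{align}
where the last equality is just the definition of the ME-norm, with $d=d_A$. Both equivalences will follow by comparing traces on each side.

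For the first equivalence, the direction ``$\Leftarrow$'' is trivial. For ``$\Rightarrow$'', suppose $M(\Delta_{\Phi}) = c\,\ketbra{\phi}$ with $\ket{\phi}$ normalised. Taking the trace gives $c = \Tr M(\Delta_{\Phi}) = d\,\norm{\Delta_{\Phi}}_{\text{ME}}$. This is also equal to $\norm{J(\Delta_{\Phi})}_1$, so the rank-one condition of Proposition~\ref{thm:diamond_norm_bound_saturation} and the form stated in Lemma~\ref{lm:m_lemma} coincide.

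For the second equivalence, again only ``$\Rightarrow$'' needs an argument. If $M(\Delta_{\Phi}) = c\, I_d$, taking traces gives $c\,d = d\,\norm{\Delta_{\Phi}}_{\text{ME}}$, so $c = \norm{\Delta_{\Phi}}_{\text{ME}}$.

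There is no real obstacle. The only points needing care are the normalisation of $\ket{\phi}$, so that $\Tr\ketbra{\phi}=1$, and the fact that $d$ denotes the input dimension $d_A$ in the ME-norm, which matches the dimension of the space on which $M(\Delta_{\Phi})$ acts after tracing out $B$. The degenerate case $J(\Delta_{\Phi})=0$ is consistent: both sides vanish, with $c=0$.
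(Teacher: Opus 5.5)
Your proof is correct and is the same argument as the paper's. Both compute $\Tr M(\Delta_{\Phi}) = \norm{J(\Delta_{\Phi})}_1 = d\,\norm{\Delta_{\Phi}}_{\text{ME}}$ and then read off the proportionality constant by taking traces in the rank-one and identity cases.
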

\begin{proof}[Proof of Lemma \ref{lm:m_lemma}]
    We have 
    \begin{align}
        &\Tr(M(\Delta_{\Phi})) = \Tr(\Tr_B\!\left[ \abs{J(\Delta_{\Phi})}\right]) \\
        =&\Tr\left( \sqrt{J(\Delta_{\Phi})^{\dagger}J(\Delta_{\Phi})} \right) \\
        =&\norm{J(\Delta_{\Phi})}_1= d \,\norm{\Delta_{\Phi}}_{\text{ME}}. 
    \end{align}
    
    Firstly, let $M(\Delta_{\Phi}) = \alpha\ketbra{\phi}$. Then,
    \begin{align}
        \Tr(M(\Delta_{\Phi})) = \alpha, 
    \end{align}
    so necessarily, $\alpha = d \,\norm{\Delta_{\Phi}}_{\text{ME}}$. 

    Similarly, if $M(\Delta_{\Phi}) = \beta I_d$, then
        \begin{align}
        \Tr(M(\Delta_{\Phi})) = \beta \, d, 
    \end{align}
    so necessarily, $\beta = \norm{\Delta_{\Phi}}_{\text{ME}}$. 
\end{proof}
\begin{proof}[Proof of Theorem \ref{thm:meas_cond}]
    Let $\Delta_{\Phi} = \Phi_M-\Phi_N$. We begin by showing that $M(\Delta_{\Phi}) = \sum_i \abs{(M_i-N_i)}$ for measurement channels. 
        The Choi operator of a measurement channel is given by 
    \begin{align}
        &J(\Phi_M) = \sum_{ab}\Phi_1(\ketbra{a}{b})\otimes\ketbra{a}{b} \\ 
        =& \sum_{ab}\left[\sum_i \Tr\left(M_i \ketbra{a}{b}\right) \ketbra{i} \right]\otimes\ketbra{a}{b} \\
        =& \sum_i \sum_{ab} \bra{b}M_i\ket{a} \ketbra{i} \otimes\ketbra{a}{b}\\ 
        =& \sum_i   \ketbra{i} \otimes M_i^T
    \end{align}
    where we have used that the transpose of $M_i$, $M_i^T = \sum_{ab} \bra{b}M_i\ket{a} \ketbra{a}{b}$. 
    Moreover, the Choi operator of $\Delta_{\Phi} = \Phi_M-\Phi_N$ is given by
    \begin{align}
        J(\Delta_{\Phi}) = \sum_i   \ketbra{i} \otimes (M_i^T-N_i^T) 
    \end{align}
    and 
    \begin{align}
        &M(\Delta_{\Phi}) = \Tr_B \left[\abs{J(\Delta_{\Phi})}\right]\\
        =& \Tr_B \left[\sum_i   \ketbra{i} \otimes \abs{(M_i-N_i)^T }\right] \\
        =& \Tr_B \left[\sum_i   \ketbra{i} \otimes \abs{(M_i-N_i) }\right]\\ 
        =& \sum_i \abs{(M_i-N_i)}. 
    \end{align}

    From Proposition \ref{thm:diamond_norm_bound_saturation} and Lemma \ref{lm:m_lemma}, we know that we have a MEWC pair if and only if $\sum_i \abs{(M_i-N_i)}$ is proportional to the rank-1 projector $\ketbra{\phi}$, and this is the case if and only if all the terms $\abs{(M_i-N_i)}$ are proportional to $\ketbra{\phi}$. Furthermore, if $\abs{(M_i-N_i)}$ is proportional to $\ketbra{\phi}$, then $(M_i-N_i)$ also has to be proportional to $\ketbra{\phi}$, which concludes the proof. 
\end{proof}
We now consider the special case of discrimination between two $2$-outcome qubit measurement channels. 
\begin{corollary}
    \label{corr:2_outcome_meas}
Let $\Phi_M,\Phi_N: \mathcal{L}(\hilbert{A}) \to \mathcal{L}(\hilbert{B})$ be a pair of measurement channels corresponding to the two $2$-outcome POVMs $M =\{M_i\}_{i=0}^{1}$ and $N = \{N_i\}_{i=0}^{1}$ respecting $M_1\neq N_1$ in dimension $d=2$. The pair of qubit channels $\Phi_M,\Phi_N$ is a Maximal Entanglement Worst Case pair if and only if 
\begin{align}
     \det(M_0-N_0) = 0 .
\end{align}
\end{corollary}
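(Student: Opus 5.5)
The plan is to reduce the corollary directly to Theorem~\ref{thm:meas_cond}. For two-outcome POVMs we have $M_1 = I - M_0$ and $N_1 = I - N_0$, so
\begin{align}
M_1 - N_1 = -(M_0 - N_0).
\end{align}
The two differences are therefore proportional to each other. Define $D := M_0 - N_0$, which is a Hermitian $2\times 2$ matrix. The hypothesis $M_1\neq N_1$ is equivalent to $D \neq 0$.

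By Theorem~\ref{thm:meas_cond}, the pair is MEWC if and only if there is a pure state $\ket{\phi}$ with $D = \gamma_0 \ketbra{\phi}$ and $-D = \gamma_1\ketbra{\phi}$. The second condition follows automatically from the first with $\gamma_1=-\gamma_0$. So the MEWC property is equivalent to: $D$ is proportional to a rank-one projector. Since $D\neq 0$, this is the same as $\operatorname{rank}(D)=1$.

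It remains to show that, for a nonzero Hermitian $2\times 2$ matrix, rank one is equivalent to $\det D = 0$.
\begin{itemize}
\item[($\Rightarrow$)] If $D=\gamma\ketbra{\phi}$, then one eigenvalue of $D$ is $0$, so $\det D = 0$.
\item[($\Leftarrow$)] Diagonalise $D = \lambda_1\ketbra{e_1}+\lambda_2\ketbra{e_2}$. The condition $\det D=\lambda_1\lambda_2=0$ forces one eigenvalue to vanish, say $\lambda_2=0$. Then $\lambda_1\neq 0$ because $D\neq 0$, so $D=\lambda_1\ketbra{e_1}$ is of the required form with $\ket{\phi}=\ket{e_1}$ and $\gamma_0=\lambda_1$ real.
\end{itemize}

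There is no serious obstacle. The only points needing care are that $M_1\neq N_1$ is needed to exclude the trivial case $D=0$, and that the coefficients $\gamma_i$ in Theorem~\ref{thm:meas_cond} come out real automatically because $D$ is Hermitian. The optimal input state $\ketbra{\phi}\otimes\tau$ then follows from the second part of Theorem~\ref{thm:meas_cond}, with $\ket{\phi}$ the eigenvector of $M_0-N_0$ for its nonzero eigenvalue.
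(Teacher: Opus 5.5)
Your proof is correct and is essentially the paper's argument. The paper computes $M(\Delta_\Phi)=|M_0-N_0|+|M_1-N_1|=2|M_0-N_0|$ and applies the rank-one criterion of Proposition~\ref{thm:diamond_norm_bound_saturation}, whereas you invoke Theorem~\ref{thm:meas_cond} directly; both reduce MEWC to $\operatorname{rank}(M_0-N_0)=1$ via $M_1-N_1=-(M_0-N_0)$, which for a nonzero Hermitian $2\times 2$ matrix is equivalent to $\det(M_0-N_0)=0$, and your use of $M_0-N_0\neq 0$ in the converse direction is more explicit than the paper's.
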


\begin{proof}
    Since $M_1 = I-M_0$ and $N_1 = I-N_0$ in the $2$-outcome case, we have 
    \begin{align}
        M(\Delta_{\Phi}) =& \sum_i \abs{(M_i-N_i)} \\
        =&\abs{M_0 -N_0} + \abs{(M_1-N_1)} \\
        =& \abs{M_0 -N_0} + \abs{(I-M_0-(I-N_0))} \\
        =& 2 \abs{M_0-N_0} 
    \end{align}
    Theorem~\ref{thm:diamond_norm_bound_saturation} says that MEWC pairs correspond to situations when $ M(\Delta_{\Phi})$ is a rank one projector. If $2 \abs{M_0-N_0} $ is rank one, then $M_0-N_0$ has to be rank one. In the qubit case, this means that the determinant has to be zero. 
\end{proof} 

Note that even though these channels may appear ``classical'' in the sense that they output a classical probability distribution, entanglement generally does increase the success probability of discriminating such channels. The MEWC scenario is only a special case of measurement channel discrimination, and there also exists measurement channel pairs that are MEBC, meaning that maximal entanglement is optimal. 

In the following, we will consider some examples of MEWC pairs for different types of measurement channels.
\begin{corollary} [Commuting measurement channel pairs that are MEWC]
    Let $\Phi_M,\Phi_N: \mathcal{L}(\hilbert{A}) \to \mathcal{L}(\hilbert{B})$ be a pair of measurement channels corresponding to the two commuting 2-outcome POVMs $M =\{M_i\}_{i=0}^{1}$ and $N = \{N_i\}_{i=0}^{1}$ with $[M_0,N_0] = [M_1,N_1] = 0$ . The pair of qubit channels $\Phi_M,\Phi_N$ is a Maximal Entanglement Worst Case pair if and only if $M_0$ and $N_0$ only have one unequal eigenvalue. For instance, the following commuting POVMs give rise to a MEWC pair of quantum channels: 
    \begin{align}
           M_0 = & 
        \begin{pmatrix}
            a & 0\\
            0 & b
        \end{pmatrix}, \qquad  M_1 = I-M_0
\end{align}
and
\begin{align}
           N_0 = & 
        \begin{pmatrix}
            a & 0\\
            0 & c
        \end{pmatrix}, \qquad  N_1 = I-N_0
\end{align}
such that 
\begin{align}
         M_0-N_0 = 
           \begin{pmatrix}
            0 & 0\\
            0 & b-c
        \end{pmatrix} ,
\end{align}   
where $a,b,c \in(0,1)$
\end{corollary}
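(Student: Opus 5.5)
The plan is to reduce the statement to Corollary~\ref{corr:2_outcome_meas}: a pair of $2$-outcome qubit measurement channels with $M_1\neq N_1$ is MEWC if and only if $\det(M_0-N_0)=0$. Commutativity only enters by letting us diagonalise both effects at the same time, which turns the determinant condition into a statement about eigenvalues. Note that $M_1\neq N_1$ is equivalent to $M_0\neq N_0$, since $M_1-N_1=-(M_0-N_0)$. We assume this non-degeneracy throughout, as the corollary implicitly does, because for $M_0=N_0$ the two channels coincide and the MEWC question is vacuous.

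\textbf{Step 1 (simultaneous diagonalisation).} Since $M_0$ and $N_0$ are Hermitian and $[M_0,N_0]=0$, there is an orthonormal basis $\{\ket{e_0},\ket{e_1}\}$ of $\mathbb{C}^2$ in which
\begin{align}
M_0=\operatorname{diag}(m_0,m_1), \qquad N_0=\operatorname{diag}(n_0,n_1).
\end{align}
The condition $[M_1,N_1]=0$ follows automatically, since $M_1=I-M_0$ and $N_1=I-N_0$.

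\textbf{Step 2 (determinant in this basis).} Here
\begin{align}
M_0-N_0=\operatorname{diag}(m_0-n_0,\,m_1-n_1),
\end{align}
so $\det(M_0-N_0)=(m_0-n_0)(m_1-n_1)$. This vanishes if and only if $M_0$ and $N_0$ agree on at least one of the common eigenvectors. Because $M_0\neq N_0$, they cannot agree on both, so exactly one pair of eigenvalues differs.

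\textbf{Step 3 (interpretation and conclusion).} The phrase ``only one unequal eigenvalue'' is to be read with respect to the common eigenbasis; this is the only point needing care. Comparing spectra as sets would be wrong: for example, $\operatorname{diag}(a,b)$ and $\operatorname{diag}(b,a)$ have the same spectrum but are not MEWC. With this reading, Corollary~\ref{corr:2_outcome_meas} gives the equivalence. Theorem~\ref{thm:meas_cond} then also shows that $M_0-N_0$ is proportional to the projector onto the eigenvector where the eigenvalues differ, and that the optimal input is $\ketbra{e_1}\otimes\tau$.

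\textbf{Step 4 (example).} For the example, $M_0-N_0=\operatorname{diag}(0,b-c)$ has determinant zero. The hypothesis $a,b,c\in(0,1)$ guarantees that all effects are valid POVM elements. We must also require $b\neq c$, since otherwise $M=N$. Under that assumption the pair is MEWC, with optimal input $\ketbra{1}\otimes\tau$.
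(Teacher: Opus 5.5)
Your proof is correct and follows essentially the paper's route: diagonalise $M_0$ and $N_0$ simultaneously so that $M(\Delta_\Phi)=2\abs{M_0-N_0}$ becomes diagonal, and then observe that being rank one (equivalently, in the qubit case, $\det(M_0-N_0)=0$ with $M_0\neq N_0$) means exactly one diagonal entry differs. The paper phrases this through the rank of $M(\Delta_\Phi)$, which also works in general dimension $d$, whereas your determinant formulation is specific to qubits. Your remarks that eigenvalues must be paired in the common eigenbasis rather than compared as spectra, and that $b\neq c$ (i.e.\ $M_0\neq N_0$) must be assumed, are correct refinements that the paper leaves implicit.
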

\begin{proof}
    Using the same notation and definitions as above, we have
    \begin{align}
& M(\Delta_{\Phi}) = 2\abs{M_0-N_0}  \\  \nonumber
& =
        2\abs{\begin{pmatrix}
\mu_1 & 0 & \cdots & 0 \\
0 & \mu_2 & \cdots & 0 \\
\vdots & \vdots & \ddots & \vdots \\
0 & 0 & \cdots & \mu_d
\end{pmatrix}
-
\begin{pmatrix}
\nu_1 & 0 & \cdots & 0 \\
0 & \nu_2 & \cdots & 0 \\
\vdots & \vdots & \ddots & \vdots \\
0 & 0 & \cdots & \nu_d
\end{pmatrix}}\\ \nonumber
& = 
2\abs{
\begin{pmatrix}
\mu_1 - \nu_1 & 0 & \cdots & 0 \\
0 & \mu_2 - \nu_2 & \cdots & 0 \\
\vdots & \vdots & \ddots & \vdots \\
0 & 0 & \cdots & \mu_d - \nu_d 
\end{pmatrix}} \nonumber, 
    \end{align}
    where $\mu_1,\cdots,\mu_d$ and $\nu_1,\cdots,\nu_d$ are the eigenvalues of the POVM elements $M_0$ and $N_0$, respectively. We see that the rank of $M(\Delta_{\Phi})$ is given by the number of unequal eigenvalues between $M_0$ and $N_0$. According to Theorem~\ref{thm:rankM_thm}, this rank upper bounds the Schmidt rank of the optimal input state. When $M_0$ and $N_0$ have all but one eigenvalues in common, $M(\Delta_{\Phi})$ is a rank-1 projector, and $\Phi_M, \Phi_N$ is a MEWC pair.
    
\end{proof}
    These are special cases when the two measurements act completely equally on all but one direction in the input space (in the basis in which they both are diagonal), and intuitively, the best way to discriminate between them is to use an input state that only carries weight in the specific direction of the eigenvector corresponding to the unequal eigenvalue, that is, an unentangled state as described in Theorem~\ref{thm:upper_bound_product}. 
    
    Since the two measurements are diagonal in the same basis, one can argue that this is a discrimination between classical measurements, and that it is therefore intuitive that entanglement is not a resource for discrimination. However, in general, entanglement does increase discriminability between commuting measurement channels. There even exists examples of discrimination tasks between commuting measurement channels where the optimal input state is maximally entangled, as shown in the following result:

\begin{corollary}[Commuting measurement channel pairs that are MEBC]
    Let $\Phi_M,\Phi_N: \mathcal{L}(\hilbert{A}) \to \mathcal{L}(\hilbert{B})$ be a pair of measurement channels corresponding to the two commuting POVMs $M =\{M_i\}_{i=1}^{n}$ and $N = \{N_i\}_{i=1}^{n}$ with $[M_0,N_0] = [M_1,N_1] = 0$. There exists such pairs of channels that are Maximally Entangled Best Case pairs. 
    For instance the measurement channels corresponding to the following POVMs form a MEBC pair.
    \begin{align}
           M_0 = & 
        \begin{pmatrix}
            0.9 & 0\\
            0 & 0.6
        \end{pmatrix}, \qquad  M_1 = I-M_0
\end{align}
and
\begin{align}
           N_0 = & 
        \begin{pmatrix}
            0.5 & 0\\
            0 & 0.2
        \end{pmatrix}, \qquad  N_1 = I-N_0
\end{align}
such that 
\begin{align}
         M_0-N_0 = 
           \begin{pmatrix}
            0.4 & 0\\
            0 & 0.4
        \end{pmatrix} 
        = 0.4 I_2
\end{align}   
\end{corollary}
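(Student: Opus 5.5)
The claim is existential, so I will verify the given example. The tool is the lower-bound saturation criterion of Proposition~\ref{thm:diamond_norm_bound_saturation}, in the form of Lemma~\ref{lm:m_lemma}: a pair is MEBC if and only if $M(\Delta_{\Phi})\propto I_d$. The first step is to check that the two POVMs are valid. $M_0$, $N_0$, $M_1=I-M_0$ and $N_1=I-N_0$ are diagonal with entries in $[0,1]$, so they are positive semidefinite. $M_0$ and $N_0$ are both diagonal, so they commute, and then $M_1$ and $N_1$ commute as well. This places the example inside the hypotheses of the statement.

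Next I compute the M-operator with the formula derived in the proof of Theorem~\ref{thm:meas_cond}. Since $J(\Delta_{\Phi})=\sum_i \ketbra{i}\otimes (M_i-N_i)^T$ is block diagonal, we get $M(\Delta_{\Phi})=\sum_i \abs{M_i-N_i}$. In the two-outcome case this reduces, as in the proof of Corollary~\ref{corr:2_outcome_meas}, to $M(\Delta_{\Phi})=2\abs{M_0-N_0}$. For this example $M_0-N_0=0.4\,I_2$, so $M(\Delta_{\Phi})=0.8\,I_2$, which is proportional to the identity.

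By Lemma~\ref{lm:m_lemma}, $M(\Delta_{\Phi})\propto I_d$ implies $M(\Delta_{\Phi})=\norm{\Delta_{\Phi}}_{\text{ME}}I_d$. As a consistency check, $\norm{\Delta_{\Phi}}_{\text{ME}}=\tfrac12\norm{J(\Delta_{\Phi})}_1=\tfrac12\cdot 4\cdot 0.4=0.8$. Proposition~\ref{thm:diamond_norm_bound_saturation} then gives $\norm{\Delta_{\Phi}}_{\text{ME}}=\norm{\Delta_{\Phi}}_{\diamond}$, so the pair is MEBC by Definition~\ref{def:MEBC}. Here $\norm{\Delta_{\Phi}}_{\diamond}=0.8$, which means the optimal success probability $0.7$ is already reached by the maximally entangled input.

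No step is a real obstacle, since the argument is a direct substitution into earlier results. The only point needing care is the identity $\abs{(M_i-N_i)^T}=\abs{M_i-N_i}$ used inside the partial trace. It is immediate here because the difference is a real diagonal matrix. More generally, the example works because $M_i-N_i$ is a multiple of the identity, so the difference map acts equally in all directions, which is exactly the symmetric situation in which maximal entanglement is optimal.
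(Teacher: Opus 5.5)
Your proposal is correct and takes essentially the same route as the paper. It uses $M(\Delta_{\Phi})=2\abs{M_0-N_0}$ from the proof of Corollary~\ref{corr:2_outcome_meas}, observes that this equals $0.8\,I_2$, and concludes MEBC via Lemma~\ref{lm:m_lemma} and the lower-bound saturation condition of Proposition~\ref{thm:diamond_norm_bound_saturation}; your extra checks (POVM validity, commutation, and $\norm{\Delta_{\Phi}}_{\text{ME}}=\norm{\Delta_{\Phi}}_{\diamond}=0.8$) are accurate.
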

\begin{proof}
    Following directly from Proposition \ref{thm:diamond_norm_bound_saturation} and Lemma \ref{lm:m_lemma}, we have a MEBC pair whenever $M(\Delta_{\Phi})$ is proportional to identity. In the case of dichotomic measurement channels, we have $M(\Delta_{\Phi}) = 2\abs{M_0-N_0}$, as shown in the proof of Corollary \ref{corr:2_outcome_meas}, and since $M_0-N_0 = 0.4 I$ in this example, it is a MEBC pair. 
\end{proof}

Furthermore, we will study what the MEWC condition implies for projective measurement channels. 
\begin{corollary}[Projective measurement channel pairs that are MEWC]
        Let $\Phi_M,\Phi_N: \mathcal{L}(\hilbert{A}) \to \mathcal{L}(\hilbert{B})$ be a pair of measurement channels corresponding to the two projective measurements (PVMs) $M =\{M_i\}_{i=0}^{1}$ and $N = \{N_i\}_{i=0}^{1}$ such that $M_0 = M_0^2,M_1 = M_1^2, M_0M_1 = 0$ and similarly for the PVM $N$. Then, the only possibility of forming a MEWC pair is if $M$ and $N$ are different course-grainings of the same PVM. For instance, the following pair of measurements give rise to a MEWC pair of channels: 
        \begin{align}
        M_0 = 
        \begin{pmatrix}
            1 & 0 & 0\\
            0 & 1 & 0\\
            0& 0 & 0
        \end{pmatrix}, 
        \, M_1 = 
          \begin{pmatrix}
            0 & 0 & 0\\
            0 & 0 & 0\\
            0& 0 & 1
        \end{pmatrix}\\
            N_0 = 
        \begin{pmatrix}
            1 & 0 & 0\\
            0 & 0 & 0\\
            0& 0 & 0
        \end{pmatrix}, 
        \, N_1 = 
          \begin{pmatrix}
            0 & 0 & 0\\
            0 & 1 & 0\\
            0& 0 & 1
        \end{pmatrix}
    \end{align}
    such that 
    \begin{align}
        M_0-N_0 = 
          \begin{pmatrix}
            0 & 0 & 0\\
            0 & 1 & 0\\
            0& 0 & 0
        \end{pmatrix}. 
    \end{align}
    These measurements can be recognised as different coarse-grainings of the projective measurement
        \begin{align}
        \left\{
        \begin{pmatrix}
            1 & 0 & 0\\
            0 & 0 & 0\\
            0& 0 & 0
        \end{pmatrix}, 
        \,  
          \begin{pmatrix}
            0 & 0 & 0\\
            0 & 1 & 0\\
            0& 0 & 0
        \end{pmatrix},
        \, 
          \begin{pmatrix}
            0 & 0 & 0\\
            0 & 0 & 0\\
            0& 0 & 1
        \end{pmatrix} \right\}, 
        \end{align}
        Furthermore, in this case we can always achieve perfect discrimination. 
\end{corollary}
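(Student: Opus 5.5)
The plan is to reduce everything to Theorem~\ref{thm:meas_cond}: the pair is MEWC if and only if $M_i-N_i=\gamma_i\ketbra{\phi}$ for all $i$. For two-outcome measurements $M_1-N_1=-(M_0-N_0)$, so the condition is simply that $D:=M_0-N_0$ has rank one, $D=\gamma\ketbra{\phi}$ with $\gamma\neq 0$ real (we assume $M\neq N$). The task is then to see what rank one forces when $M_0,N_0$ are projectors.

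\textbf{Step 1: structure of $D$.} Let $P=M_0$ and $Q=N_0$ be orthogonal projectors with $P-Q=\gamma\ketbra{\phi}$. Since $P-Q$ has spectrum in $[-1,1]$, we have $\abs{\gamma}\le 1$. Suppose $\gamma>0$ (the case $\gamma<0$ is symmetric, with the roles of $M$ and $N$ exchanged). For any $\ket{\psi}\perp\ket{\phi}$ we get $\bra{\psi}P\ket{\psi}=\bra{\psi}Q\ket{\psi}$. Moreover $P=Q+\gamma\ketbra{\phi}$, and squaring gives
\begin{align}
 Q+\gamma\ketbra{\phi} = Q + \gamma\left(Q\ketbra{\phi}+\ketbra{\phi}Q\right) + \gamma^2\ketbra{\phi}.
\end{align}
Hence $Q\ketbra{\phi}+\ketbra{\phi}Q=(1-\gamma)\ketbra{\phi}$. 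Applying this to $\ket{\phi}$ gives $Q\ket{\phi}+\bra{\phi}Q\ket{\phi}\ket{\phi}=(1-\gamma)\ket{\phi}$, so $\ket{\phi}$ is an eigenvector of $Q$. Its eigenvalue $q\in\{0,1\}$ satisfies $2q=1-\gamma$, and with $\gamma\in(0,1]$ this forces $q=0$ and $\gamma=1$.

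\textbf{Step 2: identifying the coarse-graining.} From Step 1, $Q\ket{\phi}=0$ and $P=Q+\ketbra{\phi}$. Thus $Q$, $\ketbra{\phi}$ and $I-P$ are mutually orthogonal projectors summing to $I$, i.e.\ they form a three-outcome PVM. $M=\{Q+\ketbra{\phi},\,I-P\}$ and $N=\{Q,\,\ketbra{\phi}+I-P\}$ are two different coarse-grainings of it: the outcome $\ketbra{\phi}$ is merged with the first outcome in $M$ and with the second in $N$. Conversely, any such pair has $M_0-N_0=\ketbra{\phi}$ of rank one, so it is MEWC by Theorem~\ref{thm:meas_cond}. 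The explicit $3\times 3$ example is checked directly: $M_0-N_0=\ketbra{1}$.

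\textbf{Step 3: perfect discrimination.} By Theorem~\ref{thm:meas_cond} (and Theorem~\ref{thm:upper_bound_product}) the optimal input is $\ketbra{\phi}\otimes\tau$. On $\ket{\phi}$, $\Phi_M$ outputs $\ketbra{0}$ because $\bra{\phi}M_0\ket{\phi}=1$, while $\Phi_N$ outputs $\ketbra{1}$ because $\bra{\phi}N_0\ket{\phi}=0$. These outputs are orthogonal, so the success probability is $1$; equivalently $\norm{\Delta_\Phi}_\diamond=2$.

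The main obstacle is Step 1, namely showing that a rank-one difference of projectors must have $\gamma=\pm1$ with $\ket{\phi}$ in the appropriate kernel. The squaring identity above is the route I would take; an alternative is to use that $\abs{\gamma}<1$ would require $\ket{\phi}$ to lie in a generic two-dimensional block of the pair $(P,Q)$ (Jordan's lemma), and on such a block $P-Q$ has eigenvalues $\pm\sin\theta$, so it would have rank two, a contradiction.
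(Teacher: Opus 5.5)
Your proposal is correct and follows essentially the same route as the paper. Both arguments square the projector identity, apply it to $\ket{\phi}$ to show that $\ket{\phi}$ is an eigenvector, and then use that the eigenvalue lies in $\{0,1\}$ to force $\gamma=\pm 1$; the paper squares $N_0=M_0-c\ketbra{\phi}$ rather than $P=Q+\gamma\ketbra{\phi}$, which is just the mirror image. Your Step 2, which exhibits $\{Q,\ketbra{\phi},I-P\}$ directly as the refined three-outcome PVM, and your appeal to Theorem~\ref{thm:meas_cond} instead of the qubit-only Corollary~\ref{corr:2_outcome_meas}, are slightly more explicit than the paper's commutation-and-diagonalisation remark, but they do not change the argument.
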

\begin{proof}
Assume $M_0-N_0 = c\ketbra{\phi}$ (from Corollary~\ref{corr:2_outcome_meas}) and let $M_0 = M_0^2 \, , N_0 = N_0^2$ since $M$ and $N$ are projective measurements. Then we have,
    \begin{align}
        N_0 & = M_0-c\ketbra{\phi} = N_0^2 \\
        &= (M_0-c\ketbra{\phi})^2, \\
        M_0-c\ketbra{\phi} 
         = M_0^2 &-cM_0\ketbra{\phi} - c\ketbra{\phi}M_0 \\
                 &+c^2\ketbra{\phi}, \\
     -\ketbra{\phi}& = -M_0\ketbra{\phi} - \ketbra{\phi}M_0 \\
                            &+c\ketbra{\phi},
    \end{align}
    \begin{align}
                 M_0\ket{\phi} = (1-\bra{\phi}M_0\ket{\phi}+c)\ket{\phi}, 
    \end{align}
    where we have used that $M_0 = M_0^2$ and applied $\ket{\phi}$ on both sides of the equation from the right. We also assume that $c \neq 0$. If we define $\bra{\phi}M_0\ket{\phi} := a$, and note that since $M_0$ is a projector it can only have eigenvalues $0$ or $1$, we get the following two possible cases:
    \begin{align}
        \text{Eigenvalue 1: } 1-a+c = 1-1+c = 1 \\ \implies M_0-N_0 = 1\ketbra{\phi}\\
        \text{Eigenvalue 0: } 1-a+c = 1-0+c = 0 \\ \implies M_0-N_0 = -1\ketbra{\phi}
    \end{align}
    So for projective $2$-outcome measurements, $M_0$ and $N_0$ must take the general form $M_0=P\, , N_0 = P \pm \ketbra{\phi}$, where $P$ is a projector and $M_0-N_0 = \pm \ketbra{\phi}$. In other words, the two projective measurements perform projections onto subspaces that are equal except for in one direction, given by $\ketbra{\phi}$. 
    
    We note that whenever this is the case, $M_0$ and $N_0$ necessarily commute: 
    \begin{align}
        [M_0,N_0] =& \left[P,P\pm \ketbra{\phi}\right]\\
        =&PP \pm P\ketbra{\phi} -PP \mp \ketbra{\phi} P \\
        =& 0, .  
    \end{align}
    where the two terms $P\ketbra{\phi}$ and $\ketbra{\phi} P$ cancel out due to the fact that $\ket{\phi}$ is an eigenvector of $P = M_0$. 
    
    Thus the only possible pairs of projective measurements are MEWC have $M_0$ and $N_0$ as almost equal diagonal matrices with zeros and ones as eigenvalues, with the only difference being that one of them has one extra one-value, which gives different course-grainings of some other PVM. Furthermore, using the input state corresponding to the differing eigenvalue unambiguously identifies the correct channel, so we have perfect discrimination.    
\end{proof}
 Note that for the qubit case, this can never happen non-trivially. One of the two channels necessarily becomes a trivial one with one identity element and one zero element. Consequently, two non-trivial projective dichotomic qubit measurements can never be a MEWC pair. This is, however, not the case for qubit measurement channels when we allow one of the to correspond to a general POVM, as shown in thefollowing example.

\begin{corollary}[MEWC pairs of qubit dichotmic measurements where one of them is projective]
Let $\Phi_M: \mathcal{L}(\hilbert{A}) \to \mathcal{L}(\hilbert{B})$ be a measurement channel corresponding to the projective qubit measurement  $M =\{M_i\}_{i=0}^{1}$. The only way to form a MEWC pair is if the other measurement channel  $N = \{N_i\}_{i=0}^{1}$ corresponds to a biased version of the same POVM. For instance, if $M$ is given by 
\begin{align}
         M_0 = &
        \begin{pmatrix}
            1 & 0\\
            0 & 0
        \end{pmatrix}, \, M_1 =
        \begin{pmatrix}
            0 & 0\\
            0 & 1
        \end{pmatrix},  
\end{align}
then $N$ is necessarily given by 
\begin{align}
         N_0 = &
        \begin{pmatrix}
            \alpha & 0\\
            0 & 0
        \end{pmatrix}, \, N_1 =
        \begin{pmatrix}
            1-\alpha & 0\\
            0 & 1
        \end{pmatrix}
\end{align}
with $\alpha \in [0,1]$. 
\end{corollary}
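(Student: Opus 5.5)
We reduce the statement to Corollary~\ref{corr:2_outcome_meas}. For a qubit dichotomic pair, $\Phi_M,\Phi_N$ is MEWC if and only if $M_0-N_0$ has rank one, i.e.\ $\det(M_0-N_0)=0$ with $M_0\neq N_0$. Since $M$ is a nontrivial projective qubit measurement, we may choose the computational basis so that $M_0=\ketbra{0}$ and $M_1=\ketbra{1}$. The degenerate cases $M_0\in\{0,I\}$ can be handled separately or excluded as trivial. The task then becomes characterising all POVM elements $0\le N_0\le I$ for which $\ketbra{0}-N_0$ is rank one.

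Write $N_0=\begin{pmatrix} p & z\\ \bar z & q\end{pmatrix}$ with $0\le N_0\le I$. The rank-one condition gives
\begin{align}
\det(\ketbra{0}-N_0)=-(1-p)q-\abs{z}^2=0 .
\end{align}
Positivity of $N_0$ gives $q\ge 0$, and $N_0\le I$ gives $p\le 1$. Hence $(1-p)q\ge0$, and the equation forces both $\abs{z}^2=0$ and $(1-p)q=0$. So $z=0$, and either $q=0$ or $p=1$.

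Case $q=0$ gives exactly $N_0=\mathrm{diag}(\alpha,0)$ and $N_1=\mathrm{diag}(1-\alpha,1)$ with $\alpha=p\in[0,1]$, which is the claimed family. Case $p=1$ gives $N_0=\mathrm{diag}(1,q)$. This is the ``biased'' version of the same PVM with the bias on the other outcome: relabelling the outcomes, or equivalently swapping the roles of $\ket0$ and $\ket1$, puts it in the stated form. I will either note this symmetry explicitly, or check whether the statement's ``necessarily'' needs this relabelling caveat. For the converse, any such $N$ satisfies $M_0-N_0=\mathrm{diag}(1-\alpha,0)$, which is rank one whenever $\alpha\neq1$, so Corollary~\ref{corr:2_outcome_meas} gives MEWC.

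The main obstacle is the sign argument: we must use both $N_0\ge0$ and $N_0\le I$ to force the two nonnegative terms to vanish separately. We also have to account honestly for the second branch $p=1$ and for the excluded case $\alpha=1$, where $M=N$ and the pair is trivial.
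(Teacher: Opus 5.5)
Your proof is correct and follows essentially the same route as the paper: reduce to Corollary~\ref{corr:2_outcome_meas}, parametrise $N_0$, and use $0\le N_0\le I$ to force both nonnegative terms in $(1-p)q+\abs{z}^2=0$ to vanish. You are in fact more careful than the paper: its proof claims this equation forces $\gamma=\beta=0$ and so misses your branch $p=1$, i.e.\ $N_0=\mathrm{diag}(1,q)$ (for instance $q=\tfrac12$ gives a genuine MEWC pair not of the displayed form), which matches the stated family only after \emph{simultaneously} relabelling the outcomes and swapping $\ket{0}\leftrightarrow\ket{1}$ (this fixes $M$ and sends $q\mapsto\alpha=1-q$, whereas either operation alone changes $M$, so drop your ``or equivalently'').
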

\begin{proof}
    We know from Corollary~\ref{corr:2_outcome_meas} that $\Phi_M$ and $\Phi_N$ form a MEWC pair whenever $\det(M_0-N_0)= 0$. Let 
    \begin{align}
        M_0-N_0 = &
        \begin{pmatrix}
            \lambda_0 & 0\\
            0 & \lambda_1
        \end{pmatrix}
        -
        \begin{pmatrix}
            \ \alpha & \beta\\
            \beta^* & \gamma
        \end{pmatrix}
        \\
        = & 
         \begin{pmatrix}
            \ \lambda_0-\alpha & \beta\\
            \beta^* & \lambda_1-\gamma
        \end{pmatrix}.
    \end{align}
    This is a rank-1 projector whenever
    \begin{align}
    \label{eq:qubit_2_out_eq}
        \det
         \begin{pmatrix}
            \ \lambda_0-\alpha & \beta\\
            \beta^* & \lambda_1-\gamma
        \end{pmatrix} 
        = & (\lambda_0-\alpha)(\lambda_1-\gamma)-\abs{\beta}^2 \nonumber\\
        = & 0, 
    \end{align}
    where the following constraints on the parameters ensure that $M$ and $N$ are valid POVMs:
    \begin{align}
        \lambda_0, \lambda_1 \in [0,1]\\
        \alpha, \gamma \in [0,1]\\
        \min\{\alpha\gamma,(1-\alpha)(1-\gamma)\} \geq \abs{\beta}^2. 
    \end{align}

    Let $\lambda_0 = 1$ and $\lambda_1 = 0$. Then, Equation~\ref{eq:qubit_2_out_eq} reduces to 
    \begin{align}
        (1-\alpha)(-\gamma) = \abs{\beta}^2,
    \end{align}
    but since both $(1-\alpha), \gamma$ and $\abs{\beta}^2$ are positive, this equation is only solved when $\gamma = \beta = 0$. This gives
    \begin{align}
        M_0-N_0 =  &
        \begin{pmatrix}
            1 &0\\
            0 & 0
        \end{pmatrix}
        -
        \begin{pmatrix}
            \alpha &0\\
            0 & 0
        \end{pmatrix} \\
        = &
        \begin{pmatrix}
            1-\alpha &0\\
            0 & 0
        \end{pmatrix}
    \end{align}
    for any $\alpha \in [0,1]$. Conversely, when $\lambda_0=0$ and $\lambda_1 =1$, then $\alpha = \beta = 0$ and 
    \begin{align}
        M_0-N_0 = &
        \begin{pmatrix}
            0 &0\\
            0 & 1 
        \end{pmatrix}
        -
        \begin{pmatrix}
            0 &0\\
            0 & \gamma 
        \end{pmatrix} \\
        = &
        \begin{pmatrix}
            0 &0\\
            0 & 1-\gamma. 
        \end{pmatrix}
    \end{align}
    This shows that if one measurement is projective, the only possibility for forming a MEWC pair is if the other measurement is a biased version of the same measurement.
\end{proof}
Note that these two measurements only differ in the classical post-processing of a projective measurement. The measurement $M$ is a projective measurement, guaranteed to return $0$ if it measures state $0$, and $1$ if it measures state $1$. $N$, on the other hand, performs the same measurement, but with some probability of returning the wrong value.   
    
So far, every example studied has been commuting. Generally, however, the measurements do not have to commute in order for the upper bound to be saturated, as demonstrated in the following result: .

\begin{corollary}[Non-commuting measurement channel pairs that are MEWC]
    \label{cor:non-commuting}
     Let $\Phi_M,\Phi_N: \mathcal{L}(\hilbert{A}) \to \mathcal{L}(\hilbert{B})$ be a pair of measurement channels corresponding to the two 2-outcome measurements $M =\{M_i\}_{i=0}^{1}$ and $N = \{N_i\}_{i=0}^{1}$. There exists such pairs of channels where $M$ and $N$ do not commute that form MEWC pairs. For instance, let
     \begin{align}
           M_0 = & 
         \begin{pmatrix}
            \frac{3}{10} & \sqrt{\frac{1}{10}}\\
            \sqrt{\frac{1}{10}} &\frac{7}{10}
        \end{pmatrix}, \qquad  M_1 = I-M_0
\end{align}

\begin{align}
           N_0 = & 
           \begin{pmatrix}
            \frac{3}{10} & \sqrt{\frac{1}{10}}\\
            \sqrt{\frac{1}{10}} &\frac{7}{10}
        \end{pmatrix}, \qquad  N_1 = I-N_0. 
\end{align}
This is a MEWC pair. 
\end{corollary}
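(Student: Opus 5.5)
The plan is to reduce everything to Corollary~\ref{corr:2_outcome_meas}. For a pair of dichotomic qubit measurement channels, that result says the MEWC property is equivalent to $\det(M_0-N_0)=0$ together with $M_0\neq N_0$. So the proof has three parts: check that both POVMs are valid, check that $M_0-N_0$ is a nonzero rank-one operator, and check that $[M_0,N_0]\neq 0$.

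First, a correction to the statement. As printed, $N_0$ coincides with $M_0$, so $\Delta_\Phi=0$ and the claim is vacuous. This is evidently a typo. The intended pair is the one from Figure~\ref{fig:m_plot} in Section~\ref{ch:motivation_ex}, so I will take
\begin{align}
N_0=\begin{pmatrix} \frac{1}{10} & 0\\ 0 & \frac{1}{5}\end{pmatrix}, \qquad N_1=I-N_0,
\end{align}
with $M_0$ as stated.

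\emph{POVM validity.} Since $N_0$ is diagonal with entries in $(0,1)$, both $N_0$ and $N_1$ are positive. For $M_0$ we have $\Tr M_0=1$ and $\det M_0=\frac{21}{100}-\frac{1}{10}=\frac{11}{100}>0$. Its eigenvalues are therefore $\tfrac12\bigl(1\pm\sqrt{1-0.44}\bigr)$, which lie in $(0,1)$, so $M_0\geq 0$ and $M_1=I-M_0\geq 0$.

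\emph{MEWC condition.} Next I compute
\begin{align}
M_0-N_0=\begin{pmatrix} \frac{1}{5} & \sqrt{\frac{1}{10}}\\ \sqrt{\frac{1}{10}} & \frac{1}{2}\end{pmatrix},
\end{align}
whose determinant is $\frac{1}{10}-\frac{1}{10}=0$. The matrix is nonzero, so it is rank one, namely $\gamma\ketbra{\phi}$ with $\gamma=\frac{7}{10}$ and $\ket{\phi}\propto(\sqrt{1/10},\,1/2)^T$ read off from a column. Corollary~\ref{corr:2_outcome_meas}, or equivalently Theorem~\ref{thm:meas_cond} using $M_1-N_1=-(M_0-N_0)$, then gives the MEWC property. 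Theorem~\ref{thm:upper_bound_product} further says the optimal input is $\ketbra{\phi}\otimes\tau$.

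\emph{Non-commutation.} Since $N_0=\operatorname{diag}(a,b)$ is diagonal, the off-diagonal entries of $[M_0,N_0]$ equal $\pm(b-a)\sqrt{1/10}$. Because $a=\frac{1}{10}\neq\frac{1}{5}=b$, these are nonzero. Hence $M$ and $N$ do not commute, and neither do $M_1,N_1$, since $[M_1,N_1]=[M_0,N_0]$.

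There is no real obstacle beyond fixing the typo. The only delicate point is that the determinant condition holds exactly, which requires the off-diagonal entry to be precisely $\sqrt{1/10}$ so that its square cancels $\frac15\cdot\frac12$.
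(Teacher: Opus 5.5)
Your proposal is correct and is essentially the paper's argument. The paper's proof silently uses the pair from Figure~\ref{fig:m_plot} with the roles swapped ($M_0=\operatorname{diag}(\frac{1}{10},\frac{1}{5})$, $N_0$ the non-diagonal matrix), diagonalises $M_0-N_0$ to $\operatorname{diag}(-\frac{7}{10},0)$, and invokes Corollary~\ref{corr:2_outcome_meas}. Your labelling only flips the sign, which is immaterial, and your explicit checks of POVM validity and of $[M_0,N_0]\neq 0$ are verifications the paper leaves implicit.
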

\begin{proof}
We have
    \begin{align}
           M_0-N_0 = & 
        \begin{pmatrix}
            \frac{1}{10} &0\\
            0&\frac{1}{5}
        \end{pmatrix}
        -
        \begin{pmatrix}
            \frac{3}{10} & \sqrt{\frac{1}{10}}\\
            \sqrt{\frac{1}{10}} &\frac{7}{10}
        \end{pmatrix}\\ 
        = &
        \begin{pmatrix}
            -\frac{1}{5}&-\sqrt{\frac{1}{10}} \\
            -\sqrt{\frac{1}{10}} & -\frac{1}{2}
        \end{pmatrix}, 
    \end{align}
    which can be diagonalised to 
    \begin{align}
    \begin{pmatrix}
        -\frac{7}{10} & 0 \\
         0 & 0
    \end{pmatrix}, 
     \end{align}
so the M-operator is rank $1$. Following Corollary \ref{corr:2_outcome_meas}, this means that we have a MEWC pair, showing that the MEWC condition does not entail that the measurements have to commute.
\end{proof} 
Note that the example in Corollary \ref{cor:non-commuting} above is the same discrimination task as in Figure~\ref{fig:m_plot}.
\subsection{Unitary channels that are MEWC pairs}
\label{ch:unitary_ch}
We will now present the details of the unitary channel example in Section~\ref{ch:motivation_ex_ME_bad}. For two unitaries $U$ and $V$ with corresponding quantum channels $\Phi_U$ and $\Phi_V$,~\cite{manna_maximally_2025} shows that the maximum probability of successfully discriminating  between them when using an input state with no entanglement is given by
\begin{align}
    P_{NE} = \frac{1}{2}\left[1 + \sqrt{1- \min{\abs{\operatorname{con}{\{e ^{i\theta_j} \}}}}^2}\right] \\ = \frac{1}{2}\left[1 + \sqrt{1- \abs{r_{\text{NE}}}^2}\right],
\end{align}
where $\{e ^{i\theta_j}\}_j$ are the eigenvalues of $U^{\dagger}V$, $\operatorname{con}{\{e ^{i\theta_j} \}}$ denotes all convex combinations of $\{e ^{i\theta_j}\}_j$ and we define $r_{\text{NE}} = \min{\abs{\operatorname{con}{\{e ^{i\theta_j} \}}}}$. The maximum success probability when using a maximally entangled input state is 
\begin{align}
    P_{ME} = \frac{1}{2}\left[1 + \sqrt{1- \frac{1}{d^2}\abs{\Tr(U^{\dagger}V)}^2}\right] \\ = \frac{1}{2}\left[1 + \sqrt{1- \frac{1}{d^2}\abs{\sum_{j=0}^{d-1} e^{i\theta_j}}^2}\right] \\
    = \frac{1}{2}\left[1 + \sqrt{1- \abs{r_{\text{ME}}} ^2}\right], 
\end{align}
where we have defined $ r_{\text{ME}} = \frac{1}{d}\sum_{j=0}^{d-1} e^{i\theta_j}$. 

The success probability in both cases is given by the eigenvalues of $U^{\dagger}V$, which are points on the complex unit circle, since this is a unitary. Therefore, the success probabilities can be understood by studying the convex hull of these points on the complex unit circle (see Figure~\ref{fig:unit_circle}.) 
\begin{figure}[H]
    \centering
        \includegraphics[width=\columnwidth]{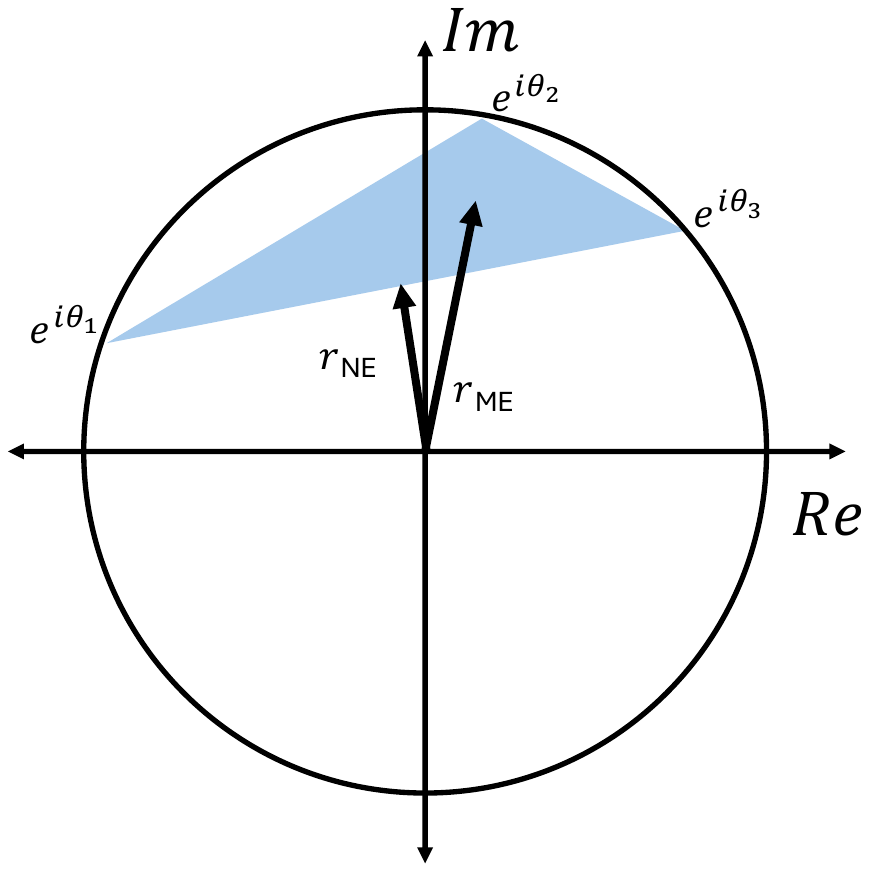}
        \caption{The convex hull of the eigenvalues of a unitary in dimension $d=3$. The discriminability with a maximally entangled input state depends on the centroid of the convex hull, described by $r_{ME}$, whereas with no entanglement the discriminability depends on $r_{NE}$ which corresponds to the minimum within the convex hull. This Figure is inspired by~\cite{manna_maximally_2025}.}
        \label{fig:unit_circle}
\end{figure}
If the centroid coincides with the origin, $P_{\text{ME}} = 1$, and the unitaries are perfectly distinguishable with a maximally entangled input state. A separable input state, on the other hand, yields perfect discrimination as long as the origin lies within the convex hull of the eigenvalues, as $r_{\text{NE}}$ represents the minimal distance from the origin to the polygon, which is zero whenever the polygon contains the origin. This highlights the results of~\cite{dariano_improved_2002}, which show that a separable state is always as good as a maximally entangled state for discrimination of unitaries. We also note that for qubit unitaries, which only have two eigenvalues, the centroid always coincides with the minimum inside the convex hull, so a maximally entangled input state is always optimal for discrimination of two qubit unitaries. Hence, any pair of qubit unitaries is a MEBC pair. 

Now, consider a pair of qudit unitaries $U$ and $V$ for which $W = U^{\dagger}V$ has $d-1$ equal eigenvalues and the final eigenvalue is at the complete opposite side of the complex unit circle. For example, if $W = U^{\dagger}V = \operatorname{diag}(-1,1,1,..,1)$, then $W$ has one eigenvalue at $e^{i\pi} = -1$ and $d-1$ eigenvalues at $e^{i0} = 1$, as demonstrated in Figure~\ref{fig:unit_circle_ex}.

\begin{figure}[H]
    \centering
        \includegraphics[width=\columnwidth]{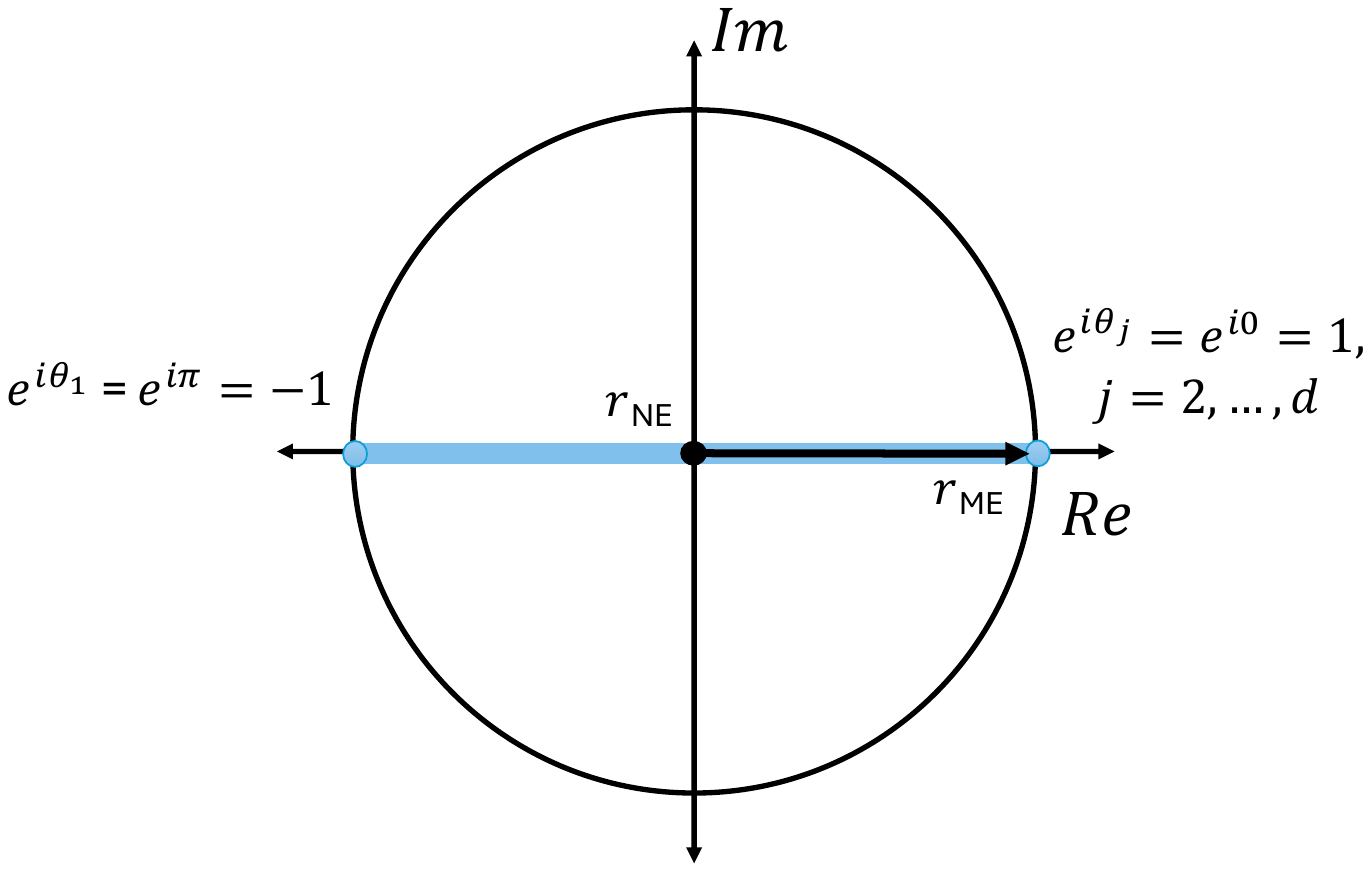}
        \caption{The eigenvalues of the unitary $W  = \operatorname{diag}(-1,1,1,..,1)$ on the complex unit circle. We see that the origin is within the convex hull of the eigenvalues, but the centroid approaches one as the dimension $d$ increases.} 
        \label{fig:unit_circle_ex}
\end{figure}

In the geometrical picture described above, we see that $U$ and $V$ are perfectly distinguishable with a separable input state since the convex hull of the eigenvalues includes the origin. However, the centroid approaches the edge of the complex unit circle in the large-$d$ limit, so a maximally entangled input state is maximally suboptimal in the limit $d\to \infty$:
    \begin{align}
      \lim_{d \rightarrow \infty}P_{ME} = & \lim_{d \rightarrow \infty}\frac{1}{2}\left[1 + \sqrt{1- \frac{1}{d^2}\abs{\sum_{j=0}^{d-1}e^{i\theta_j}}^2}\right]  
    \\ = & \lim_{d \rightarrow \infty}\frac{1}{2}\left[1 + \sqrt{1- \frac{(d-2)^2}{d^2}}\right] \\
    = & \lim_{d \rightarrow \infty}\frac{1}{2}\left[1 + \sqrt{1- \left(1-\frac{2}{d}\right)^2}\right] = \frac{1}{2}.
\end{align}

With a separable input state, on the other hand, the probability of correctly discriminating the channels is given by 
    \begin{align}
        P_{prod} = \frac{1}{2}\big[1 + \sqrt{1- \min{\abs{\text{con}{\{e ^{i\theta_j} \}}}}^2}\big] \\ = \frac{1}{2}\big[1 + \sqrt{1- 0} \big] = 1 
    \end{align}
for all system sizes $d$. 

The strategies in the two cases are equivalent to the ones described in the introduction.
\section{Discussion and conclusion}
In conclusion, this work considers the question of when entanglement reduces instead of enhances discriminability between pairs of quantum channels, countering the intuition one might have that entanglement is always useful for discrimination task. We define the terms Maximal Entanglement Best Case and Maximal Entanglement Worst case pairs to mean pairs of quantum channels for which the inequality 
\begin{align}
    \norm{\Delta_{\Phi}}_{\text{ME}} \leq \norm{\Delta_{\Phi}}_{\diamond} \leq d\,\norm{\Delta_{\Phi}}_{\text{ME}} 
\end{align} 
is saturated in the lower and upper bound, respectively, identifying channel discrimination tasks where entanglement acts as a resource and when it does not. 

We present conditions for and examples of these two cases, in particular focusing on the cases when entanglement is not useful for discrimination. Through plotting the maximal success probability of discrimination as a function of the entanglement entropy of the input state, we present a finer quantitative analysis of the effect of entanglement on different channel discrimination tasks. These plots demonstrate our analytical results showing that there exists pairs of channels for which a separable input state is optimal and adding entanglement strictly reduces discriminability. 

Through this analysis, we identify the M-operator as an important quantity yielding information about the optimal input state without performing the full discrimination task which demands optimisation. We can, for instance, easily check whether entanglement is useful or not, and we can identify the subspace within the input space where any optimal input state lives. It is also easy to check if a maximally entangled input is optimal.

A natural extension of this work would be to consider the discrimination of more then two quantum channels. Here, we have only considered discrimination of pairs of quantum channels, in which case we observe that the maximum advantage that could be achieved using an optimal input state (as opposed to a maximally entangled input state) is bounded by the system size $d$. This arises from the fact that the maximum success probability of discrimination for pairs of channels (or states) depends on the diamond norm and the trace norm. For discrimination of $N$ quantum channels, the success probability is no longer expressed via these norms, and there is a chance that the maximally entangled input state might perform significantly worse in some cases. Hence, a natural next step from this work would be to characterise the cases when the maximally entangled input state is a good or a bad input state for discrimination of $N$ quantum channels, and to investigate whether the possible disadvantage in this case is unbounded.

Furthermore, it would also be interesting to investigate the MEWC conditions for other classes of channels, allowing us to identify more examples of MEWC pairs.

\section*{Acknowledgments}
We thank Seiseki Akibue and Jonas Gran Melandsør for fruitful discussions.
MTQ is supported by the Agence Nationale de la Recherche (ANR) through the JCJC programme under grant number ANR-25-CE47-6396-01-HOQO-KS. 




%

\appendix
\label{ch:appendix}
\section{Details about the plots of success probability as a function of entanglement entropy}
Since the main focus of this work is to study the way entanglement affects discrimination, we are interested in quantifying how the discriminability varies when the amount of entanglement changes. For qubits, the amount of entanglement is fully determined by one parameter, something which allows for studying this relation through plots like the ones shown in Section \ref{ch:motivation_ex}. In this section, we will discuss how these plots are made. 

A two-qubit input state can be parametrised as
\begin{align}
    \ket{\Psi_{\theta}} =\cos{\theta}\ket{00} + \sin{\theta}\ket{11} \in \hilbert{A}\otimes \hilbert{A'}, 
\end{align}
where $\theta = 0$ gives the pure state $\ket{\Psi_{0}}=\ket{00}$ and $\theta=\frac{\pi}{4}$ gives a maximally entangled state $\ket{\Psi_{\frac{\pi}{4}}}=\frac{1}{\sqrt{2}}\left(\ket{00}+ \ket{11}\right)$. Notice that the entanglement entropy of the state, given by 
\begin{align}
    S(\ketbra{\Psi_\theta})= &-\operatorname{Tr}[\rho_A\log(\rho_A)] \\
    =& -\cos^2{\theta}\log\cos^2{\theta}-\sin^2{\theta}\log\sin^2{\theta}, \nonumber
\end{align}
where $\rho_A = \operatorname{Tr}_{A'}\ketbra{\Psi_{\theta}}$, and therefore also the entanglement entropy, only depends on the parameter $\theta$.   

However, the classification of a state by its Schmidt coefficients is only unique up to local unitaries, which means that any state $U_A \otimes U_{A'} \ket{\Psi_{\theta}}$ for arbitrary unitaries $U_A, U_{A'}$ has the same entanglement entropy as $\ket{\Psi_{\theta}}$. Therefore, in order to find the maximum success probabiloity for a given, fixed amount of entanglement, we need to find the unitaries $U_A, U_{A'}$ that give the highest success probability for a given $\theta$. This discrimination strategy is shown in Figure \ref{fig:med_unitaries}. 
\begin{figure}[H]
        \centering
        \includegraphics[width=\linewidth]{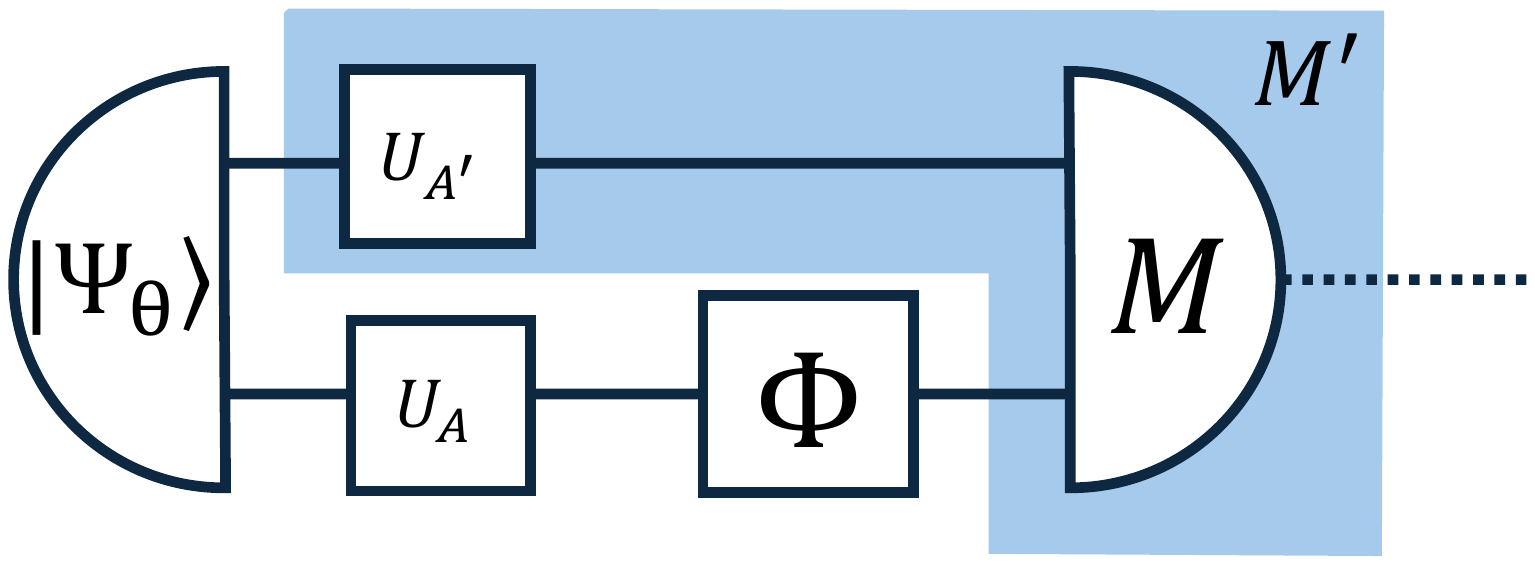}
        \caption{Quantum channel discrimination with parametrised two-qubit input states $\ket{\Psi_{\theta}}$. To find the maximum success probability for a given amount of entanglement as fixed by $\theta$, we need to find the best unitaries $U_A, U_{A'}$. However, we see that $U_{A'}$ can be absorbed in the measurement $M$, and since we optimise over all measurements, we do not have to consider the unitary $U_{A'}$.}
        \label{fig:med_unitaries}
\end{figure}
We notice, however, that the unitary $U_{A'}$ can be absorbed into the measurement $M$ to give a different measurement $M'$, and since the measurement is optimised over in any case, the unitary $U_{A'}$ does not make any difference. The unitary $U_A$, on the other hand, does in general change the discrimination success probability. This is an issue that has been treated differently in the different individual cases.

Firstly, when the input state is maximally entangled, the identity $(U \otimes I)\ket{\phi^+} = (I \otimes U^T)\ket{\phi^+}$ shows that applying a unitary to one part of a maximally entangled state is equivalent to applying its transpose to the opposite part. In this way, $U_A$ too, just as $U_{A'}$ is absorbed in the measurement and automatically optimised. This is the reason why it suffices to show that one maximally entangled input in order to know that any maximally entangled input state is optimal. 

Furthermore, in some cases we can argue that the unitary $U_A$ is of no importance by showing that the set of channels we aim to discriminate are jointly unitary covariant, a property which is defined as follows: 
\begin{definition}[Joint unitary covariance]
    A set of channels $\{\Phi_i\}_i$ is said to by jointly unitary covariant if for all $i$, 
    \begin{align}
        \Phi_i(U\rho U^{\dagger}) = V_U \Phi_i V_U^{\dagger}, 
    \end{align}
    where $U$ and $V_U$ are unitaries. 
\end{definition} 

If this is the case, applying a unitary $U_A$ before the channel is equivalent to applying a different unitary $V_{U_A}$ after the channel, but in this case it can be absorbed in the measurement as previously argued, and we can therefore disregard it altogether.

For the case of the Werner-Holevo channels, it can be shown that applying a unitary $U$ before the channel is always equivalent to applying a unitary $V_U = \Bar{U}$ after the channel, where $\Bar{U}$ is the complex conjugate of $U$\footnote{Note that this is also the case for the broader class of Werner-Holevo channels of the form $\Phi_{WH} = p \Phi_{+} + (1-p) \Phi_- $ with $p \in (0,1)$, and not just the extremal ones studied here.}: 
\begin{align}
    \Phi_{\pm}(U \rho U^{\dagger}) = &\operatorname{Tr(U \rho U^{\dagger})I \pm (U \rho U^{\dagger})^T} \\
    = & \operatorname{Tr}(\rho)I \pm U^{\dagger T} \rho^T U^T \\
    = & \operatorname{Tr}(\rho)\Bar{U} \Bar{U^{\dagger}} \pm \Bar{U} \rho^T \Bar{U^{\dagger}} \\
     = & \Bar{U} (\operatorname{Tr}(\rho)I \pm  \rho^T ) \Bar{U^{\dagger}} \\
     = & \Bar{U} \Phi_{\pm}(\rho) \Bar{U^{\dagger}}. 
\end{align}
Therefore, we can safely use the input state $\ket{\Psi_{\theta}}$ directly without being concerned with the unitary $U_A$ in this case. 

For the four Pauli channels presented in Section \ref{ch:motivation_ex}, this is not the case. However, in this case, we have the following analytical expression for the maximum success probability of discrimination as a function of $\theta$:
\begin{align}
    p^* = \frac{1}{2}(1+ \sin(2 \pi)),
\end{align} 
This can be understood as follows. The four bipartite output states resulting from applying the four Pauli channels are given by
\begin{align}
     \ket{\psi_X} &= (I\otimes X) \ket{\Psi_{\theta}} = \cos(\theta) \ket{01} + \sin{\theta}\ket{10}  \\
    \ket{\psi_Y} &= (I\otimes Y) \ket{\Psi_{\theta}} = \cos(\theta) \ket{01} - \sin{\theta}\ket{10}  \\
    \ket{\psi_Z} &= (I\otimes Z) \ket{\Psi_{\theta}} = \cos(\theta) \ket{00} - \sin{\theta}\ket{11}  \\
    \ket{\psi_I} &= (I\otimes I) \ket{\Psi_{\theta}}=  \cos(\theta) \ket{00} + \sin{\theta}\ket{11} . 
\end{align}
These can be split into two subsets of output states $\{\ket{\psi_Z}, \ket{\psi_I}\}$ and $\{\ket{\psi_X}, \ket{\psi_Y}\}$ that can be perfectly distinguished between since they live in orthogonal subspaces $A = \operatorname{Span}\{\ket{00}, \ket{11}\}$ and  $B = \operatorname{Span}\{\ket{01}, \ket{10}\}$, respectively. Within each subspace, we have a dicrimination task between a pair of quantum channels, and the Holevo-Helstrom theorem gives: 
\small
\begin{align}
     p^*_A = \frac{1}{2}(1+ \sqrt{1- \braket{\psi_I}{\psi_Z}}) = \frac{1}{2}(1+ \sin(2 \pi)), \\
   \small p^*_B = \frac{1}{2}(1+ \sqrt{1- \braket{\psi_X}{\psi_Y}}) = \frac{1}{2}(1+ \sin(2 \pi)), 
\end{align}
\normalsize
so an upper bound on the success probability for discriminating the four channels can be found by assuming perfect discrimination between the two orthogonal subspaces and optimal discrimination within the subspaces: 
\begin{align}
    p^* =&  p(\text{subspace A})p(\text{guess correct in subspace A}) \\ & +  p(\text{subspace B})p(\text{guess correct in subspace B})\\
     =& \frac{1}{2}\frac{1}{2}(1+ \sin(2 \pi)) + \frac{1}{2}\frac{1}{2}(1+\sin(2 \pi))\\
     = & \frac{1}{2}(1+ \sin(2 \pi)).  
\end{align}
This function perfectly matches the optimised maximal success probability given in Figure \ref{fig:pauli_plot}, so we see that the bound is attained. Furthermore, we also get the same curve when heuristically optimising over a parametrised unitary as explained below, so it seems like the unitary $U_A$ does not affect discrimination in this case either. 

For the types of channels considered so far, it has turned out that the unitary $U_A$ does not make any difference. This is, however, not the case for the class of measurement channels in general. This was solved by heuristically optimising over a parametrisation of the unitary given by
\begin{align}
    U = 
    \begin{pmatrix}
        e^{(i \alpha)\cos(\gamma)} & e^{(i \beta)\sin(\gamma)} \\
        -e^{(-i \beta)\sin(\gamma)} & e^{(-i \alpha)\cos(\gamma)}
    \end{pmatrix}. 
\end{align} 
Although we have no guarantee that this optimisation finds the globally optimal unitary, the results seem to be consistent with what we expect in the cases studied. It also seems to be stable across different initialisations. 

For the particular example considered in \ref{fig:m_plot}, the optimised unitary $U_A$ is given by 
\begin{align}
    \begin{pmatrix}
        0.85 & 0.53 \\
        -0.53 & 0.85
    \end{pmatrix}. 
\end{align}

\end{document}

%% file: 0_MTQ_macros.tex
\newcommand{\1}{\mbox{1}\hspace{-0.25em}\mbox{l}}

\makeatletter 
\newcommand{\doublewidetilde}[1]{{%
  \mathpalette\double@widetilde{#1}%
}}
\newcommand{\double@widetilde}[2]{%
  \sbox\z@{$\m@th#1\widetilde{#2}$}%
  \ht\z@=.9\ht\z@
  \widetilde{\box\z@}%
}
  
\makeatother

\newtheorem{definition}{Definition}
\newtheorem{theorem}{Theorem}
\newtheorem{lemma}{Lemma}
\newtheorem{proposition}{Proposition}
